\documentclass{article}

\usepackage[preprint]{neurips_2026}

\usepackage[utf8]{inputenc} 
\usepackage[T1]{fontenc}    
\usepackage[hidelinks]{hyperref}       
\usepackage{url}            
\usepackage{booktabs}       
\usepackage{amsmath}        
\usepackage{amsfonts}       
\usepackage{amsthm}         
\IfFileExists{nicefrac.sty}{\usepackage{nicefrac}}{}       
\IfFileExists{microtype.sty}{\usepackage[protrusion=true,expansion=false]{microtype}}{}      
\IfFileExists{xcolor.sty}{\usepackage{xcolor}}{}         
\usepackage{tikz}           
\usetikzlibrary{positioning,arrows.meta,shapes,calc,patterns,backgrounds,fit}
\IfFileExists{pgfplots.sty}{\usepackage{pgfplots}\pgfplotsset{compat=1.18}\newif\ifpgfplotsloaded\pgfplotsloadedtrue}{\newif\ifpgfplotsloaded\pgfplotsloadedfalse}  
\usepackage{pifont}         
\usepackage{graphicx}       
\IfFileExists{enumitem.sty}{\usepackage{enumitem}}{}       
\IfFileExists{multirow.sty}{\usepackage{multirow}}{}       

\newtheorem{theorem}{Theorem}

\newtheorem{proposition}[theorem]{Proposition}
\usepackage{booktabs}
\usepackage{tabularx}
\usepackage{array}
\usepackage{ragged2e}
\usepackage{algorithm}
\IfFileExists{makecell.sty}{\usepackage{makecell}}{}
\IfFileExists{svg.sty}{\usepackage{svg}}{}
\usepackage{algpseudocode}
\IfFileExists{placeins.sty}{\usepackage{placeins}}{\providecommand{\FloatBarrier}{}}  
\usepackage{subcaption}
\floatname{algorithm}{Protocol}
\newcolumntype{L}[1]{>{\RaggedRight\arraybackslash}p{#1}}
\newcolumntype{C}[1]{>{\Centering\arraybackslash}p{#1}}
\newcolumntype{Y}{>{\RaggedRight\arraybackslash}X}
\title{SecureClaw: Clawing Back Control\\ of LLM Agents}

\author{%
  Yuhan Ma$^{1}$ \quad Stefan Schmid$^{1}$ \\
  $^{1}$TU Berlin \\
  \texttt{yuhan.ma@campus.tu-berlin.de} \quad
  \texttt{stefan.schmid@tu-berlin.de}
}

\begin{document}

\maketitle

\begin{abstract}
Tool-using large language model (LLM) agents face two distinct security failures: unauthorized external actions and exposure of sensitive plaintext inside the runtime before any final output check can intervene. Existing defenses usually protect one boundary, either the planner/runtime or the action sink, and therefore do not by themselves secure both surfaces. We present SecureClaw, a dual-boundary architecture that places authorization at the effect sink and plaintext confinement at the read boundary. Sensitive reads pass through a trusted gateway that replaces raw values with opaque handles and, in the evaluated deployment, bounded summaries as an explicit declassification interface. Writes that change external state follow a PREVIEW$\rightarrow$COMMIT protocol in which only a trusted executor may commit the exact canonical request authorized by policy. The runtime can still plan over summaries and symbolic references, but cannot directly dereference secrets or perform side effects. Across AgentDojo, AgentLeak, and Agent Security Bench (ASB), SecureClaw is the only defense we evaluate in a common harness that simultaneously retains usable task utility and achieves 0\% attack success rate (ASR) on ASB, 0.64\% ASR on AgentDojo, and 3.23\% overall leak on AgentLeak's attacked parity lane, which measures final-output and internal-relay leakage.
\end{abstract}

\section{Introduction}
\label{sec:intro}

Tool-using language-model agents are moving from chat-style assistance to workflow automation. A single agent may read email threads, invoices, customer-support tickets, calendars, code repositories, or medical intake forms, and then act on the outside world by sending messages, sharing files, updating records, or scheduling meetings. 

In these settings, a failure is not merely a poor answer: it can become an irreversible external effect or an exposure of sensitive data during execution.

This threat model is well established~\citep{yu2025survey}. Indirect prompt injection showed that attacker-controlled content can steer LLM-integrated systems by collapsing the boundary between data and instructions~\citep{greshake2023not}. AgentDojo and Agent Security Bench (ASB) show that tool-using agents remain vulnerable to prompt injection that redirects actions toward attacker-chosen outcomes~\citep{debenedetti2024agentdojo,zhang2025agent}. More recent work broadens the attack surface to tool selection, web agents, and agent-mediated financial workflows~\citep{toolselect2025,wasp2025}. In parallel, AgentLeak shows that multi-agent systems introduce internal leakage paths that output-only auditing does not capture, especially inter-agent messages and shared memory~\citep{agentleak2026}.

These results point to two critical security questions. The first is effect authorization: whether the exact request that reaches an external sink is authorized for the current caller, session, inputs, and context. The second is runtime plaintext confinement: whether the untrusted runtime ever receives directly sensitive data in usable form. These questions are related but not interchangeable. A boundary around effectful tools can stop an unauthorized commit, yet it does not help if the runtime has already read a secret and can relay it through internal channels. Conversely, runtime-local information-flow defenses can reduce some leakage, but they still leave substantial trust in planner state or runtime enforcement, whereas architectural or execution-time systems move control closer to the action or resource boundary~\citep{rtbas2025,costa2025securing,camel2025,ace2025,progent2025,pcas2026,ac4a2026}.

SecureClaw is built around a simple systems claim: the untrusted agent runtime should control neither when external effects happen nor where raw protected values reside. SecureClaw therefore places authorization at the effect sink and plaintext confinement at the read boundary. On the read path, a trusted gateway returns an opaque handle, which is a high-entropy symbolic reference, plus a bounded summary only when limited natural language access is necessary for planning. On the write path, the runtime may only propose an effectful action. A trusted policy engine authorizes the exact request after canonicalization, meaning deterministic serialization of the sink-relevant fields, and a separate trusted executor is the only component that can commit the resulting external effect.
\paragraph{\emph{Running example.}}
\emph{Consider an email-and-finance workflow in which the agent must inspect an invoice and then send it to the correct approver or draft a response. SecureClaw does not return the raw invoice body to the runtime; the gateway returns a handle and a bounded summary such as ``invoice from vendor A, amount \$4.2k, due Friday.'' The runtime can reason over the summary and carry the handle through memory and later calls. If attacker-controlled invoice text asks the agent to send to \texttt{attacker@evil.com}, the runtime may still propose that send, but only the executor can commit a request authorized for the exact recipient, channel, session, and context. If denied, recovery can offer a safe continuation such as drafting for confirmation. The example highlights the split: sink control blocks external effects, while handle confinement limits internal plaintext relay.}

In this architecture, the runtime still plans, searches, and composes tool calls, but it does so over symbolic references rather than ambient plaintext, and it cannot directly turn a proposal into a committed external action. The handle layer follows a capability-style design in which authority is carried by an unforgeable reference rather than by exposure to raw data~\citep{saltzer1975protection,watson2010capsicum}. When the runtime needs task-relevant information, the summary interface releases only a sanitized, policy-bounded summary $D(v)$ of a sensitive value $v$. This interface is therefore an explicit declassification channel, meaning that it is an intentional and auditable exception to full confidentiality, rather than an accidental plaintext escape hatch~\citep{sabelfeld2005declassification}.

The key design point is that these powers fail at different places: external effects become real at the sink, whereas plaintext fails as soon as a raw protected value enters an adversarial runtime. This predicts a concrete empirical pattern: sink-side enforcement should stop unauthorized commits without closing internal relay channels, while read-side confinement should close relay channels without authorizing commits. The ablation and bypass suite tests exactly this non-substitutability claim.

This paper makes three contributions.

\begin{itemize}
    \item \textbf{A security decomposition for agent systems.}
    We identify two non-substitutable security requirements for tool-using agents: request-bound authorization of external effects and confinement of sensitive plaintext away from the untrusted runtime.

    \item \textbf{A practical dual-boundary architecture.}
    We present SecureClaw, which mediates sensitive reads with opaque handles, effectful writes with a PREVIEW$\rightarrow$COMMIT executor, and blocked actions with deny-aware recovery: a fixed-template safe-continuation path after a denied commit. In the evaluated deployment, bounded summaries are treated as an explicit declassification interface rather than as plaintext.
    
    \item \textbf{A common-harness evaluation that separates the two surfaces.}
    Across AgentDojo, AgentLeak, and ASB, SecureClaw is the only evaluated common-harness baseline that simultaneously reaches 0\% ASR on ASB, 0.64\% ASR on AgentDojo, and 3.23\% overall leak on AgentLeak's attacked parity lane; the ablation and bypass suite then isolate why sink-side authorization and read-side confinement are complementary rather than interchangeable.
\end{itemize}

\FloatBarrier

\section{Problem Setting and Scope}
\label{sec:problem}

We consider a tool-using agent deployment with an LLM-driven \textbf{runtime} that plans and invokes tools; a trusted \textbf{gateway} on the read path; a trusted \textbf{handle store} for protected values; a \textbf{policy engine} that authorizes previewed requests; a trusted \textbf{executor} on the write path; and a \textbf{user} who may confirm high-risk actions. The policy engine may be a single service, with an optional distributed realization for deployments that also want transcript privacy against one evaluator (Appendix~\ref{app:proofs:policy}). An action is \textbf{effectful} if it can change external state or send information to an external principal, for example by sending an email, sharing a file, updating a record, or initiating a payment. A value is \textbf{sensitive} if exposing it to the runtime in plaintext would create a meaningful disclosure risk.

\begin{table}[t]
\centering
\caption{Two security surfaces, their control points, and their validation.}
\label{tab:surface-summary}
\small
\setlength{\tabcolsep}{5pt}
\renewcommand{\arraystretch}{1.18}
\begin{tabularx}{\linewidth}{
>{\bfseries\RaggedRight\arraybackslash}p{0.16\linewidth}
Y
Y
>{\RaggedRight\arraybackslash}p{0.16\linewidth}
}
\toprule
Surface & If left inside the runtime & SecureClaw control point & Main evidence \\
\midrule
Effect surface
& A compromised or injected runtime can mutate recipients, channels, or context and still reach the external sink.
& Executor-side PREVIEW$\rightarrow$COMMIT, request binding, freshness checks, and replay protection.
& AgentDojo, ASB, bypass suite \\
\midrule
Plaintext surface
& A runtime that reads plaintext can relay it through internal channels such as inter-agent messages (C2) or memory (C5) even when a later commit is blocked.
& Gateway-side handles, sink-scoped dereference, and explicit bounded summaries that prevent protected values from re-entering the runtime outside declared declassification interfaces.
& AgentLeak, ablation \\
\bottomrule
\end{tabularx}
\end{table}

SecureClaw targets two security surfaces. The \textbf{effect surface} asks whether the exact request that reaches an external sink was authorized for the current caller, session, inputs, and context. The \textbf{plaintext surface} asks whether the untrusted runtime ever receives raw secrets that it can relay through user-visible outputs, inter-agent messages, memory, logs, or later tool calls. AgentLeak's channel taxonomy is useful here: C1 denotes the final response, C2 inter-agent messages, C3/C4 tool I/O, C5 shared or persistent memory, C6 logs, and C7 skill or supply-chain ingress~\citep{agentleak2026}. For SecureClaw, C2 and C5 are especially important because they expose relay paths that remain invisible to output-only defenses~\citep{kong2025survey}.

We evaluate two adversaries. In the \textbf{prompt-injection} setting, attacker-controlled instructions appear inside retrieved content and attempt to redirect downstream actions or induce disclosure, as in AgentDojo, AgentLeak, and ASB~\citep{debenedetti2024agentdojo,agentleak2026,zhang2025agent}. In the \textbf{compromised-runtime} setting, the runtime is fully adversarial: it may fabricate requests, tamper with recipients or channels, replay prior authorizations, mix sessions, guess handles, or exploit implementation mistakes. The core boundary guarantees trust the gateway, the handle store, the policy engine, and the executor. Appendix~\ref{app:proofs:policy} states the extra assumptions for the optional distributed policy variant. The design further assumes complete mediation of effectful sinks and sensitive reads, correct protected-field classification, bounded clock drift, and crash-safe replay protection~\citep{saltzer1975protection}. 

Our goals are request-bound effect authorization, confinement of raw sensitive plaintext outside declared declassification interfaces, and explicit bounded declassification through the summary interface. SecureClaw does not claim that every policy-allowed action is semantically correct for the user's latent intent, nor does it survive compromise of the gateway or executor. By \textbf{residual failures}, we mean failures that remain after the two target boundaries behave as specified, for example a policy-allowed action with the wrong principal or object because the policy is too coarse. We analyze those residuals explicitly in Section~\ref{sec:exp:residual}.
 
Table~\ref{tab:surface-summary} summarizes the two security surfaces, why they require different boundary placement, and how each one is validated empirically.

\section{SecureClaw Design}
\label{sec:design}

SecureClaw imposes a simple design rule on every security surface. Effectful writes are mediated by a trusted executor, and sensitive reads are mediated by a trusted gateway. The runtime remains useful because it continues to plan, search, route handles, and compose tool calls, but it is no longer entrusted with plaintext or with committing external effects.

\begin{figure}[t]
\centering

\begin{subfigure}{0.8\linewidth}
    \centering
    \includegraphics[width=\linewidth]{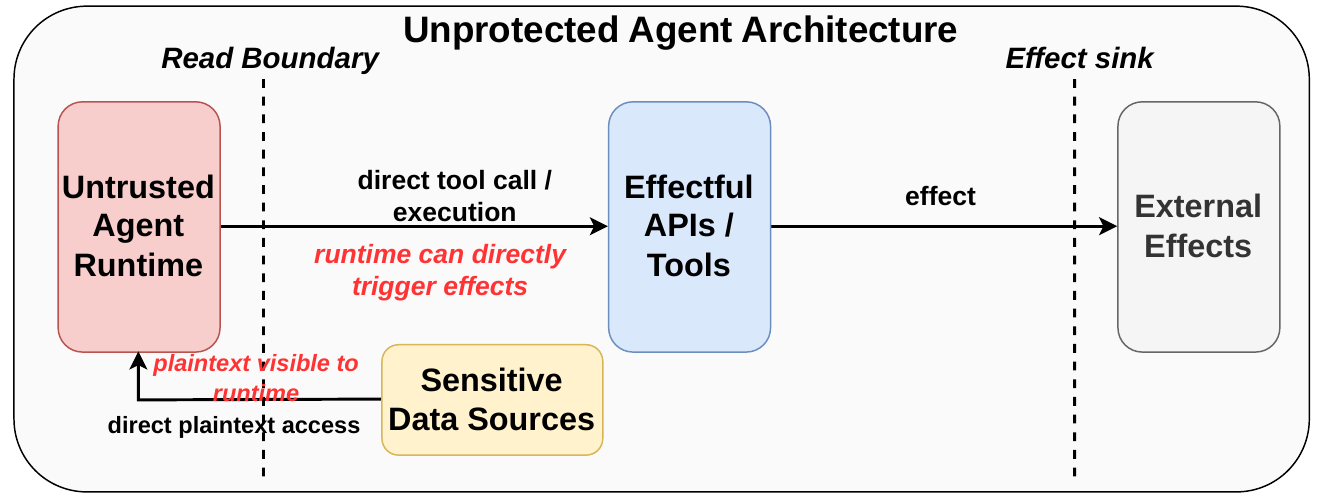}
    \caption{Unprotected agent architecture.}
    \label{fig:secureclaw-before}
\end{subfigure}

\vspace{-0.2em}

\begin{subfigure}{0.8\linewidth}
    \centering
    \includegraphics[width=\linewidth]{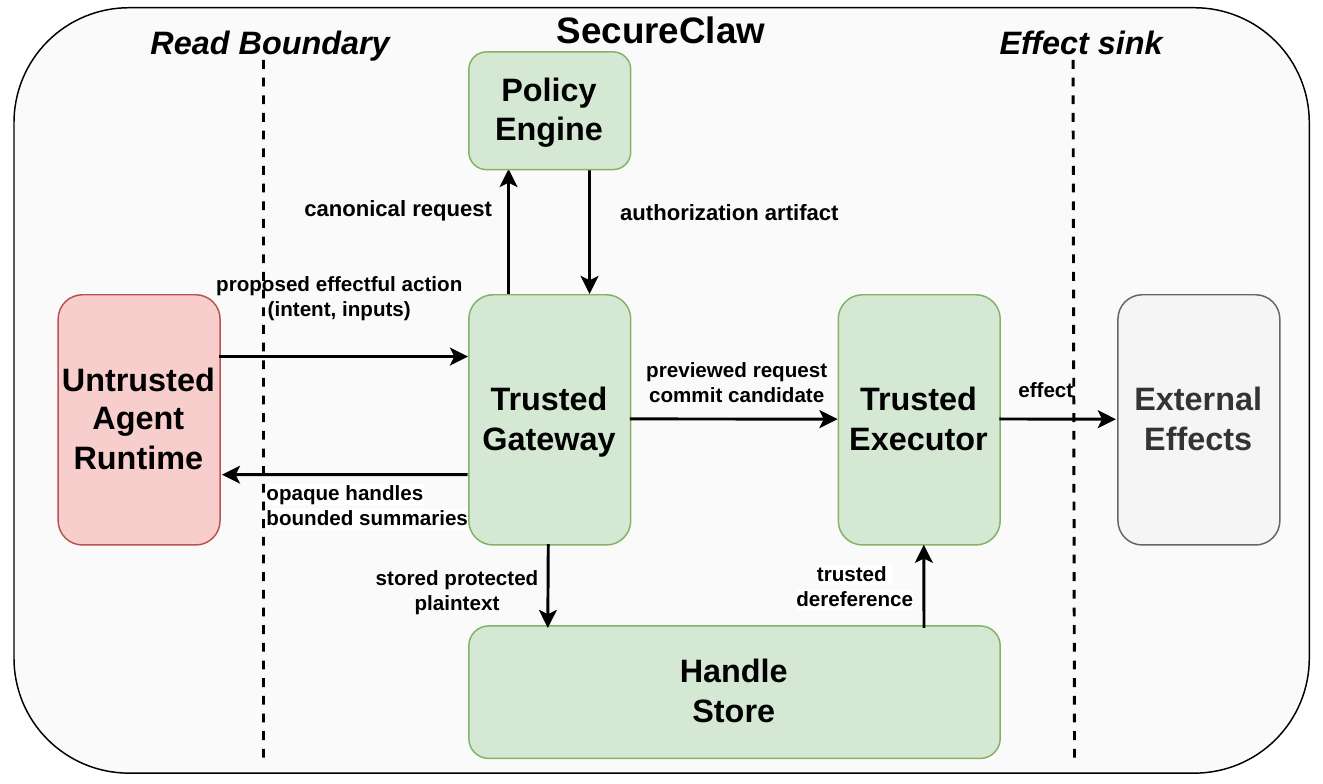}
    \caption{SecureClaw architecture.}
    \label{fig:secureclaw-after}
\end{subfigure}

\vspace{-0.5em}

\caption{Comparison between an unprotected agent architecture and SecureClaw.}
\label{fig:architecture}
\end{figure}

\subsection{Design overview and component roles}
\label{sec:design:overview}

SecureClaw separates planning from security enforcement. The \textbf{runtime} reasons, searches, and orchestrates tool calls, but it is untrusted for both authorization and confidentiality. The \textbf{gateway} mediates sensitive reads, classifies returned data, mints handles, and canonicalizes effectful preview requests. The \textbf{handle store} retains protected plaintext and handle metadata outside runtime memory. The \textbf{policy engine} authorizes or denies the exact canonicalized request and returns an authorization artifact bound to its digest. The \textbf{executor} is the sole component that can reach an effectful sink: before commit, it recomputes the request binding, verifies the authorization artifact, checks freshness and replay state, validates any required confirmation token, and enforces handle-resolution restrictions.

An execution therefore has two mediated phases. On the read path, the gateway returns an opaque handle and, when planning requires it, a bounded summary produced by a deterministic, schema-aware summary operator. On the write path, every effectful request remains a proposal until executor-side verification succeeds. The runtime remains useful because it can plan over symbolic references, but it cannot directly resolve protected plaintext or turn a proposal into a committed external action.

\subsection{Read path: opaque-handle confinement}
\label{sec:design:handles}
The read path protects the plaintext surface. When a tool returns sensitive content, the gateway stores the plaintext in a protected handle table and returns an \textbf{opaque handle} to the runtime. A handle is a high-entropy symbolic reference associated with caller, session, time-to-live, object version, and the set of sinks at which dereference is allowed. In practice it can be generated from a fresh nonce and a secret handle key, e.g., $\mathsf{hid}=\mathrm{HMAC\text{-}SHA256}(k_{\mathsf{handle}},\mathsf{nonce}\|\mathsf{session}\|\mathsf{object})$, and resolved by lookup in the trusted store. It is not a hash of the secret value itself. The runtime may route the handle through later computation, memory, or tool calls, but it cannot resolve the handle into plaintext.

SecureClaw's evaluated read interface is intentionally explicit: the runtime receives an opaque handle plus a bounded summary computed by a fixed schema-aware operator with explicit item and character caps. That summary is treated as an explicit declassification interface rather than an incidental copy of the original value~\citep{sabelfeld2005declassification}. The handle layer therefore prevents fresh raw protected values from re-entering the runtime channels highlighted by AgentLeak: inter-agent messages and memory carry symbolic references rather than newly dereferenced secrets~\citep{watson2010capsicum,saltzer1975protection}. Any remaining content-dependent signal in the deployed system must originate from the authorized summary interface itself, not from a later fresh dereference of trusted storage.

\subsection{Write path: PREVIEW$\rightarrow$COMMIT execution boundary}
\label{sec:design:boundary}

The write path protects the effect surface. The runtime may request an effectful action, but it cannot commit that action directly. The runtime proposes only the action family and runtime-visible inputs; the trusted gateway or surrounding deployment context appends authenticated caller, session, and sink-context fields before canonicalization. Thus every effectful request
\[
\rho = (\mathsf{intent},\mathsf{caller},\mathsf{session},\mathsf{inputs},\mathsf{ctx})
\]
passes through Protocol~\ref{prot:preview-commit}. Here $\mathsf{intent}$ is the action family, $\mathsf{caller}$ and $\mathsf{session}$ bind the principal and interaction, $\mathsf{inputs}$ contains tool arguments and handles, and $\mathsf{ctx}$ contains sink-relevant fields such as recipients, channels, object identifiers, provenance hashes, policy/schema versions, confirmation class, and object versions.

\begin{algorithm}[t]
\caption{PREVIEW$\rightarrow$COMMIT}
\label{prot:preview-commit}
\begin{algorithmic}[1]
\State The runtime proposes $(\mathsf{intent},\mathsf{inputs})$; the trusted gateway appends authenticated $\mathsf{caller}$, $\mathsf{session}$, and $\mathsf{ctx}$ to form $\rho$.
\State The gateway canonicalizes $\rho$ and computes the binding digest
\[
h \gets \mathrm{HMAC\text{-}SHA256}(k_{\mathrm{bind}}, \mathrm{Canon}(\rho)),
\]
where $k_{\mathrm{bind}}$ is a secret key shared only by the trusted binding/verification components.
\State The gateway queries the policy engine and obtains an authorization artifact that binds the allow/deny decision to the exact digest $h$.
\State The executor independently recomputes $h$, verifies the authorization artifact, checks freshness and replay state, validates any confirmation token, and enforces handle-resolution restrictions.
\If{all checks succeed}
    \State The executor dereferences any required handles and commits the external effect.
\Else
    \State The executor fails closed.
\EndIf
\end{algorithmic}
\end{algorithm}

The crucial property is that the executor is the only component that can reach the effectful sink. A compromised runtime may attempt to mutate the recipient, channel, or other fields after preview, but such changes alter the canonical request and therefore invalidate the bound authorization artifact. Theorem~\ref{thm:nbe} formalizes this as request-bound authorization integrity.

This distinction also clarifies why SecureClaw is stronger than argument filtering alone. If the runtime can still reach the effectful sink, checking arguments does not create a non-bypassable commit boundary. Zero-argument effectful tools make this explicit: there are no arguments to inspect. Dually, once unrestricted plaintext enters an adversarial runtime, downstream checks cannot prevent it from being re-encoded into another channel. SecureClaw therefore treats sink mediation and plaintext residency as separate first-class controls.

\subsection{Deny-aware recovery}
\label{sec:design:recovery}

Fail-closed execution is necessary for security, but a denial should not force the workflow to terminate when a safe alternative exists. When the executor rejects a commit, it returns a structured denial drawn from a fixed template family with only a coarse action class, a reason code, and safe next-step hints that do not reveal dereferenced protected content, such as switching from ``send'' to ``draft for confirmation'' or requesting user approval. This recovery logic does not weaken the commit invariant; it preserves utility after blocked attacks.

\paragraph{\emph{Running example, continued.}}
\emph{The invoice workflow instantiates both the protocol notation and the recovery path. On the read path, the gateway stores the raw invoice as object $\mathsf{inv42}$ and returns a handle such as
$\mathsf{hid}=\mathrm{HMAC\text{-}SHA256}(k_{\mathsf{handle}},\mathsf{nonce}\|\mathsf{session}\|\mathsf{object})$,
where $k_{\mathsf{handle}}$ is a trusted gateway/store key, $\mathsf{nonce}$ is fresh randomness, $\mathsf{session}$ binds the current interaction, and $\mathsf{object}$ names the stored invoice version. The runtime sees only $\mathsf{hid}$ and the bounded summary. If the runtime later proposes to send the invoice, it supplies only the \textsc{send\_email} intent and visible inputs, such as the handle and draft body; the gateway constructs $\rho=(\mathsf{intent},\mathsf{caller},\mathsf{session},\mathsf{inputs},\mathsf{ctx})$ using authenticated deployment state for the caller/session and sink context. The context contains the recipient, channel, policy version, confirmation class, and invoice version. $\mathrm{Canon}(\rho)$ deterministically serializes these sink-relevant fields, and
$h\gets\mathrm{HMAC\text{-}SHA256}(k_{\mathrm{bind}},\mathrm{Canon}(\rho))$ binds the policy decision to those exact bytes. Changing the recipient after preview changes $\mathrm{Canon}(\rho)$, so the executor rejects the commit. Recovery may return only a coarse reason and a safe hint, such as drafting for confirmation, without dereferencing new plaintext.}

\subsection{Threat-to-mechanism summary and formal scope}
\label{sec:formal}

Each SecureClaw mechanism protects a different failure mode.
\paragraph{Unauthorized external effects.}
The execution boundary and request binding protect the effect surface: an external side effect can commit only if the executor verifies authorization for the exact canonicalized request that reaches the sink.

\paragraph{Internal plaintext relay.}
The handle layer protects the plaintext surface: inter-agent messages, memory operations, and later tool calls carry symbolic references rather than raw sensitive values, except for content intentionally released through declared declassification interfaces. In the deployed system, any remaining runtime exposure occurs only through the bounded-summary read interface.

\paragraph{Formal scope.}
The appendix mirrors this decomposition. Theorem~\ref{thm:nbe} formalizes request-bound authorization integrity for the commit path. Theorem~\ref{thm:sm} formalizes handle-only raw-plaintext confinement as a reference case. Theorem~\ref{thm:bounded-summary} then quantifies the additional distinguishing power introduced by the bounded-summary interface used in the experiments. Appendix~\ref{app:proofs:policy} states an optional policy-side privacy refinement, and Appendix~\ref{app:proofs:compromise} summarizes the guarantees lost if trusted components are compromised.

\section{Experiments}
\label{sec:experiments}

We evaluate SecureClaw against our central claim: securing tool-using agents requires protecting two distinct surfaces, the effect surface and the plaintext surface.

The experiments ask whether SecureClaw (i) blocks unauthorized external effects, (ii) removes runtime-held plaintext from internal leakage channels, and (iii) keeps workflows usable after blocked actions without relaxing the commit invariant. We answer these questions with full-benchmark evaluations on AgentDojo, AgentLeak, and ASB, a mechanism ablation, a compromised-runtime bypass suite, and targeted measurements of deny-aware recovery and overhead.

\paragraph{Baselines and setup.}
  We compare against four common-harness baselines, meaning matched rows, model, scorer, temperature, and task protocol: \textbf{Plain} (no defense), \textbf{IPIGuard}~\citep{ipiguard2025}, \textbf{DRIFT}~\citep{drift2025}, and \textbf{Faramesh}~\citep{faramesh2026}. Appendix~\ref{app:methodology} specifies the provenance, adapter, and defense surface for each baseline. Architecturally related systems without directly comparable public implementations are discussed qualitatively in Section~\ref{sec:related}. We report ASR, benchmark utility, and per-channel leak rates; SecureClaw's mediated outputs are scored after alias resolution and required confirmation handling. All main runs use \texttt{gpt-4o-mini-2024-07-18} with temperature 0 on AgentDojo~\citep{debenedetti2024agentdojo}, AgentLeak~\citep{agentleak2026}, and ASB~\citep{zhang2025agent}. The main-table AgentLeak columns use the attacked parity lane because it jointly exposes final-output and internal-relay leakage. Appendix~\ref{app:methodology} gives exact counts, confidence intervals, and small SecureClaw cross-model spot checks.

\subsection{Main cross-benchmark results}
\label{sec:exp:cross}

\begin{table}[t]
\centering
\caption{Benchmark evaluation on same-harness baselines (\texttt{gpt-4o-mini}). ASR = attack success rate; Atk.\ Util = benchmark utility on attacked rows; Benign = benign-task success rate.}
\label{tab:cross-bench}
\small
\setlength{\tabcolsep}{4pt}
\resizebox{\textwidth}{!}{
\begin{tabular}{l ccc ccc cc}
\toprule
& \multicolumn{3}{c}{\textbf{AgentDojo} ($n{=}629$)} & \multicolumn{3}{c}{\textbf{AgentLeak} ($n{=}496$)} & \multicolumn{2}{c}{\textbf{ASB} ($n{=}2000$)} \\
\cmidrule(lr){2-4}\cmidrule(lr){5-7}\cmidrule(lr){8-9}
System & ASR$\downarrow$ & Attack.\ Utility$\uparrow$ & Benign $\uparrow$ & \shortstack{Any\\leak$\downarrow$} & C2$\downarrow$ & C5$\downarrow$ & ASR$\downarrow$ & Util$\uparrow$ \\
\midrule
Plain       & 31.48 & 47.22 & 70.10 & 93.95 & 92.74 & 51.61 & 91.65 & 0.85 \\
IPIGuard    & 15.26 & 51.99 & 64.95 & 78.63 & 50.00 & 57.06 & 99.60 & 22.05 \\
DRIFT       &  2.38 & 52.78 & 59.79 & 72.18 & 47.98 & 53.63 & 87.75 & 0.45 \\
Faramesh    & 22.89 & 47.69 & 60.82 & 93.95 & 92.74 & 51.61 & \textbf{0.00} & 3.70 \\
SecureClaw  & \textbf{0.64} & \textbf{56.60} & 60.82 & \textbf{3.23} & \textbf{0.20} & \textbf{0.00} & \textbf{0.00} & \textbf{88.90} \\
\bottomrule
\end{tabular}
}
\end{table}

Table~\ref{tab:cross-bench} shows that, among the evaluated common-harness baselines in our study, SecureClaw is the only same-harness configuration that jointly drives unauthorized commits to zero/near-zero on the effect benchmarks and reduces AgentLeak's attacked parity-lane leakage to 3.23\% (16/496). On effectful benchmarks it reaches 0\% ASR on ASB and 0.64\% ASR on AgentDojo (4/629). On AgentLeak, the residual is C1-dominant rather than an internal-relay failure: 15/496 cases remain on C1, one isolated case appears on C2, and C5 stays at 0/496. Faramesh matches the effect boundary on ASB but leaves AgentLeak's internal channels open because plaintext still lives inside the runtime. IPIGuard and DRIFT reduce some attack success rates, but their enforcement root remains runtime-local. 
\paragraph{Utility under attack.}
SecureClaw retains 88.90\% utility on ASB at 0\% ASR. A separate recovery ablation (Appendix~\ref{app:recovery-ablation}) shows that safe continuation raises matched-sample ASB utility from 70\% to 86\% without increasing ASR. On the AgentLeak parity run, attack utility is $74.6\%$ and benign utility is $77.0\%$ (Table~\ref{tab:agentleak-utility}), so the low parity-lane leak rate is not achieved by collapsing the workflow into refusal.

\subsection{Mechanism isolation: why both pillars are necessary}
\label{sec:exp:ablation}

\begin{table}[t]
\centering
\caption{Mechanism ablation on a matched AgentLeak slice (24 attacked + 24 benign valid rows per configuration; C2/C5 leak rates are reported on attacked rows only).}
\label{tab:ablation}
\small
\begin{tabular}{lcccc}
\toprule
Config & Boundary & Handles & C2 Leak & C5 Leak \\
\midrule
Full        & \checkmark & \checkmark & \textbf{0.0\%} & \textbf{0.0\%} \\
$-$Boundary & $\times$   & \checkmark & \textbf{0.0\%} & \textbf{0.0\%} \\
$-$Handles  & \checkmark & $\times$   & 20.8\% & 8.3\% \\
$-$Both     & $\times$   & $\times$   & 25.0\% & 4.2\% \\
\bottomrule
\end{tabular}
\end{table}
Table~\ref{tab:ablation} isolates mechanisms on a matched 48-row AgentLeak slice. Removing handles raises attacked-row C2 leakage from $0/24$ to $5/24$ or $6/24$, and C5 leakage reappears ($2/24$ or $1/24$). Removing the execution boundary while keeping handles leaves both C2 and C5 at $0/24$. The confidentiality gain therefore comes from handle confinement, not from the write boundary accidentally masking it.

The converse appears when effectful sinks are stressed directly: handles do not stop a compromised runtime from attempting unauthorized commits. That role belongs to the executor-side boundary and is validated in the bypass suite. Together, the ablation and adversarial validation support the main system's claim: the two pillars protect different surfaces, and neither subsumes the other.

\paragraph{Deny-aware recovery is the source of high blocked-attack utility.}
We measure recovery separately to show that SecureClaw's usability does not come from weaker enforcement. On a matched ASB sample ($n{=}50$), deny-aware recovery raises utility from 70\% to 86\% while preserving 0\% ASR. On a matched AgentLeak sweep ($n{=}30$), it raises utility from 40.0\% to 66.7\%, again with no ASR cost, while adding about 0.33 extra tool turns per task. SecureClaw stays usable because it offers safe next steps after denial, not because it relaxes the commit invariant.

\subsection{Robustness under runtime compromise}
\label{sec:exp:bypass}

Behavioral prompt-injection benchmarks are necessary but not sufficient because the formal claims also cover a fully adversarial runtime. We therefore add direct adversarial stress tests against the commit path.

Across 29 bypass scenarios spanning authorization-artifact tampering, malformed authorization material, request-binding violations, session/caller mismatches, context-binding attacks, temporal replay, and chained attacks, SecureClaw denies all 25 adversarial cases and accepts all 4 positive controls. In a payment case study, the executor blocks 4/4 unauthorized commits while allowing the legitimate transfer. These results show that the commit path respects the request-bound invariant even when the runtime constructs adversarial inputs.

\subsection{Residual failures and overhead}
\label{sec:exp:residual}

\begin{table}[t]
\centering
\caption{Residual failures after SecureClaw closes the two target surfaces.}
\label{tab:residual-main}
\small
\begin{tabular}{p{0.27\textwidth}p{0.12\textwidth}p{0.46\textwidth}}
\toprule
Residual class & Count & Interpretation \\
\midrule
Foreign-principal/missing-contextual-bind & $3/629$ & The action family is permitted, but the target principal is attacker-chosen; richer principal and object binding is needed. \\
Broader-scope workspace action & $1/629$ & A workspace mutation remains individually allowed yet wrong for the task; coarse allow bits are insufficient. \\
Authorized declassification pressure & $16/496$ & Fifteen cases remain on C1 and one on C2. The residual remains concentrated on the declassification plane rather than systematic internal relay.\\
\bottomrule
\end{tabular}
\end{table}

The remaining failures show what is left once the two target surfaces are controlled. The dominant residual is no longer unauthorized execution or internal-channel plaintext relay, but authorized-yet-misaligned behavior inside the policy-allowed region. On AgentDojo, all 4/629 remaining attack successes stay within the allowed action space: three are foreign-principal or missing-contextual-bind failures, and one applies an allowed workspace action to the wrong object. On AgentLeak, 16/496 attacked parity-lane scenarios still leak: fifteen are C1 final-output cases and one is an isolated C2 hit. We observe no C5 leakage and no systematic reopening of internal relay, shifting the residual problem to policy-allowed misalignment and user-visible declassification pressure.

The remaining failures are therefore pressure on the policy contract and explicit declassification plane, not bypasses of the core invariants. A stricter summary interface would reduce residual declassification pressure, while richer principal, object, and phase binding would strengthen authorization for the remaining AgentDojo failures.

The core boundary cost is practical. Executor-side verification is sub-millisecond; single-service authorization adds about 15\,ms, and the optional distributed two-evaluator variant adds about 149\,ms in our local orchestration measurements. Relative to LLM inference latencies of 500--2000\,ms, these costs are not dominant.

\section{Additional Related Work}
\label{sec:related}

\textbf{Benchmarks and runtime defenses.}
AgentDojo, ASB, AgentLeak, WASP, and tool-selection attacks have shifted evaluation from prompt classification toward dynamic agents that read untrusted content, call tools, and maintain state~\citep{debenedetti2024agentdojo,zhang2025agent,agentleak2026,wasp2025,toolselect2025}. Runtime-centric defenses such as IPIGuard and DRIFT make the planner more robust by isolating injected instructions or constraining tool dependencies~\citep{ipiguard2025,drift2025}. SecureClaw uses such defenses as common-harness baselines when their public surface is comparable, but studies a different claim: an injected or compromised runtime should not be the reference monitor for external effects or sensitive plaintext.

\textbf{Execution control and agent authorization.}
Faramesh, Progent, PCAS, and AC4A move enforcement toward policy or capability checks at the agent action boundary~\citep{faramesh2026,progent2025,pcas2026,ac4a2026}. SecureClaw is complementary, but separates two issues that action-boundary work can conflate. First, PREVIEW$\rightarrow$COMMIT binds the policy decision to the exact canonical request later executed, closing the time-of-check/time-of-use gap between planner inspection and sink invocation. Second, sink-side authorization alone does not remove secrets already resident in planner state: a blocked commit cannot undo leakage through inter-agent messages or memory. This motivates coupling effect mediation with a separate read boundary.

\textbf{Information flow, capabilities, and policy expressiveness.}
The read path draws on information-flow control, declassification, and capability systems~\citep{rtbas2025,costa2025securing,sabelfeld2005declassification,saltzer1975protection,watson2010capsicum,sabelfeld2003language}. Unlike settings where the protected computation is an enforceable program, LLM agents have opaque hidden state and natural-language channels. SecureClaw therefore makes raw protected values unrepresentable in the runtime except through symbolic handles, and treats bounded summaries as explicit declassification rather than benign preprocessing. Handles are not ambient authorities for the model: dereference occurs only at trusted endpoints after caller, session, sink, and policy checks. Finally, practical authorization systems such as Zanzibar show that utility depends on expressive principal, object, and relation bindings~\citep{pang2019zanzibar}. SecureClaw's remaining benign-utility loss fits this pattern: richer bindings, phase-specific capabilities, and more precise declassification budgets can improve utility while preserving the request-binding and plaintext-residency invariants.

\section{Limitations and broader impact}
\label{sec:limitations}

SecureClaw reduces unsafe agentic effects and plaintext exposure, but it does not guarantee semantic correctness within policy-allowed actions and still incurs benign-utility loss on AgentDojo. The residuals motivate stronger principal, object, and phase binding in the policy layer~\citep{pang2019zanzibar}. The main deployment risk is false assurance: if protected fields are misclassified or effectful sinks remain unmediated, the architecture's boundary guarantees no longer apply. SecureClaw therefore should be deployed with explicit sink inventories, schema audits, and conservative confirmation policies for irreversible actions. Positively, the architecture can reduce unauthorized external effects and internal relay of sensitive values in high-stakes agent workflows without requiring the untrusted runtime to become a trusted security monitor. SecureClaw trusts the gateway, policy engine, and executor; Appendix~\ref{app:proofs:compromise} details compromise effects, and the optional distributed policy variant relies on non-collusion.

\section{Conclusion}
\label{sec:conclusion}

SecureClaw is built on a simple claim: securing LLM agents requires separating control of external effects from access to sensitive plaintext. The executor-side PREVIEW$\rightarrow$COMMIT boundary makes committed actions request-bound and non-bypassable, while the gateway-side opaque-handle layer keeps internal channels symbolic rather than plaintext-bearing. Across AgentDojo, AgentLeak, and ASB, SecureClaw is the only evaluated same-harness system in our study that drives unauthorized effects to zero or near zero while remaining usable under attack through deny-aware recovery. The ablation and bypass results show that the gains are structural: the two mechanisms protect different failure modes, and high-stakes deployments need both.

\begin{ack}
Research in part supported by the German Research Foundation (DFG), SPP 2378 - ReNO-2, grant 511099228, 2025-2029.
\end{ack}

\clearpage

\bibliographystyle{plainnat}
\bibliography{references}

\appendix
\section{Extended Experimental Evidence}
\label{app:extended-results}

This appendix substantiates the main empirical claims with exact counts, uncertainty bounds, benchmark breakdowns, and robustness checks. We keep only evidence that directly supports the paper's core effect/plaintext decomposition. The evidence is consistent across the different evaluation blocks: the execution boundary closes the effect surface, opaque handles close the internal relay surface, and deny-aware recovery recovers substantial under-attack utility without reopening either one.

\paragraph{Statistical reporting.}
All reported rates are Bernoulli proportions.
For the main comparisons we report exact counts, 95\% confidence intervals, and two-sided Fisher exact tests where pairwise significance is informative. For zero-event rows, the appendix uses exact Clopper--Pearson upper bounds rather than asymptotic normal approximations.

\subsection{Exact counts and selected uncertainty}
\label{app:uncertainty}

Table~\ref{tab:uncertainty-main} summarizes the most load-bearing empirical claims with exact counts and confidence intervals.
It complements the percentage tables in the main paper and makes the evidentiary footing completely explicit.

\begin{table}[h]
\centering
\caption{Selected exact counts, 95\% exact confidence intervals, and exact tests.}
\label{tab:uncertainty-main}
\small
\setlength{\tabcolsep}{4pt}
\begin{tabular}{p{0.23\textwidth}p{0.26\textwidth}p{0.26\textwidth}p{0.15\textwidth}}
\toprule
Metric & SecureClaw & Comparator & Two-sided Fisher $p$ \\
\midrule
AgentDojo ASR & $4/629=0.64\%$ $[0.17,1.62]$ & DRIFT $15/629=2.38\%$ $[1.34,3.90]$ & $1.83 \times 10^{-2}$ \\
AgentLeak parity-lane any leak & $16/496=3.23\%$ $[1.85,5.19]$ & DRIFT $358/496=72.18\%$ $[68.01,76.08]$ & $1.27 \times 10^{-128}$ \\
ASB ASR & $0/2000=0.00\%$ $[0.00,0.18]$ & DRIFT $1755/2000 = 87.75\%$ $[86.23,89.16]$ & $<10^{-700}$ \\
ASB utility & $1778/2000=88.9\%$ $[87.44,90.24]$ & Faramesh $74/2000=3.7\%$ $[2.92,4.63]$ & $<10^{-700}$ \\
\bottomrule
\end{tabular}
\end{table}

Table~\ref{tab:uncertainty-main} makes clear that the core security results remain strong even under exact accounting.

\subsection{Evaluation methodology}
\label{app:methodology}

\paragraph{Model and inference settings.}
All primary comparisons use \texttt{gpt-4o-mini-2024-07-18} through OpenRouter with temperature 0.
Where the provider exposes a seed parameter, we request a fixed seed.
Because API-hosted inference is not guaranteed to be perfectly deterministic, repeated runs are interpreted as stability checks rather than as bitwise replicas.
SecureClaw is evaluated with the deployed bounded-summary interface throughout the benchmark results in the paper; the security appendix separately analyzes a handle-only reference case and measures the deployed summary operator's confidentiality behavior.

\paragraph{Benchmark protocols.}
\textbf{AgentDojo} uses v1.1.2 with the \texttt{important\_instructions} attack family, yielding 629 attacked rows and 97 benign rows across banking, slack, travel, and workspace.
\textbf{AgentLeak} uses its official split of 996 total scenarios: 500 benign and 496 attacked scenarios for the parity lane covering $C1/C2/C5$, plus channel-specific lanes for $C3$, $C4$, and $C6$.
The main-paper ``Any leak'' number refers to the 496 attacked parity-lane scenarios.
\textbf{ASB} is evaluated on five direct prompt-injection styles (naive, \texttt{escape\_characters}, \texttt{fake\_completion}, \texttt{context\_ignoring}, and \texttt{combined\_attack}) with 400 rows each, for 2000 total rows per baseline.

  \paragraph{Baseline fidelity.}
  Table~\ref{tab:baseline-fidelity} summarizes how each same-harness baseline is instantiated.

\begin{table}[t]
  \centering
  \caption{Baseline provenance and common-harness instantiation.}
  \label{tab:baseline-fidelity}
  \footnotesize
  \setlength{\tabcolsep}{4pt}
  \renewcommand{\arraystretch}{1.08}
  \begin{tabularx}{\textwidth}{@{}lYYY@{}}
  \toprule
  System & Source used in this study & Modification scope & Common-harness entry point \\
  \midrule

  Plain
  & Benchmark no-defense configuration.
  & No defense code or wrapper.
  & Native benchmark runners and direct AgentLeak outputs. \\

  IPIGuard
  & Authors' upstream code used in this study.
  & Harness-only patches for API compatibility, retries, resume, and benchmark integration; defense logic unchanged.
  & IPIGuard AgentDojo runner, AgentLeak pipeline adapter, and ASB-style construct/traverse loop. \\

  DRIFT
  & Authors' upstream code used in this study.
  & Harness-only patches for hosted-model access, retries, logging, and benchmark integration; defense logic unchanged.
  & DRIFT pipeline, AgentLeak tool-execution loop, and DRIFT-provided ASB fork. \\

  Faramesh
  & Upstream Faramesh core used in this study.
  & Core runtime unchanged; our harness supplies benchmark adapters and policies.
  & Local Faramesh server mediating AgentDojo, AgentLeak, and ASB-style tool calls. \\

  \bottomrule
  \end{tabularx}
\end{table}

\paragraph{Mediated-interface accommodations.}
SecureClaw uses two evaluation accommodations that arise from mediated interfaces rather than from relaxed security:
\textbf{alias resolution} and \textbf{auto-confirmation}.
Alias resolution maps output handles such as \texttt{EMAIL\_REF\_3} back to task-relevant literals before utility scoring, so that mediated read outputs are not punished merely for being symbolic.
Auto-confirmation emulates a user click in non-interactive benchmark runners that otherwise provide no confirmation channel. It is applied only to requests that are already policy-allowed and remains bound to $(a,p,h)$, so it cannot convert a policy-denied attack proposal into an allowed commit.

\paragraph{Run-to-run variation.}
Hosted inference introduces provider-level nondeterminism, so we use two complementary stability checks.
First, three completed banking/slack AgentDojo replications keep ASR at 0\% on those completed suites while utility varies by only single-digit points; Table~\ref{tab:replication} gives the details.
Second, a matched three-seed AgentLeak-style SecureClaw sweep ($n=30$ per seed) keeps ASR at 0.0\% across all seeds with utility mean 64.4\% and standard deviation 8.4, and an independent ASB rerun at $n=50$ also preserves 0\% ASR with 86.0\% utility.
The qualitative conclusion is stable: security is much less sensitive than utility to hosted-model nondeterminism.

\begin{table}[h]
\centering
\caption{Replication stability on AgentDojo banking + slack attack suites.}
\label{tab:replication}
\small
\begin{tabular}{lcccc}
\toprule
 & \multicolumn{2}{c}{Banking ($n=144$)} & \multicolumn{2}{c}{Slack ($n=105$)} \\
\cmidrule(lr){2-3}\cmidrule(lr){4-5}
Run & ASR & Utility & ASR & Utility \\
\midrule
Run 1 (primary) & 0.0\% & 56.3\% & 0.0\% & 57.1\% \\
Run 2 & 0.0\% & 52.1\% & 0.0\% & 49.5\% \\
Run 3 & 0.0\% & 51.4\% & 0.0\% & 48.6\% \\
\midrule
Sample std.\ dev. & 0 pp & $\pm 2.7$ pp & 0 pp & $\pm 4.6$ pp \\
\bottomrule
\end{tabular}
\end{table}

\paragraph{Compute resources.}
All experiments were run on a local macOS workstation (Apple M4 CPU, 16 GB RAM) with API-hosted LLM inference.
The gateway, handle store, executor, and policy-evaluation components execute locally; the reported latency numbers therefore reflect algorithmic and local orchestration overhead, not geographically distributed network round trips between separate servers.

\paragraph{Asset provenance and licenses.}
Table~\ref{tab:assets} lists the main third-party benchmark assets used in our experiments and their stated upstream licenses. Our same-harness baselines use upstream code or benchmark adapters as summarized in Table~\ref{tab:baseline-fidelity}; retained third-party components keep their original LICENSE files and attribution metadata.

\begin{table}[h]
\centering
\caption{Main third-party benchmark assets used in this paper.}
\label{tab:assets}
\small
\begin{tabular}{p{0.20\textwidth}p{0.27\textwidth}p{0.12\textwidth}p{0.28\textwidth}}
\toprule
Asset & Citation / upstream source & License & Use in this paper \\
\midrule
AgentDojo & \citep{debenedetti2024agentdojo} / official benchmark repository & MIT & Prompt-injection evaluation \\
AgentLeak & \citep{agentleak2026} / official benchmark repository & MIT & Internal-leakage evaluation \\
ASB & \citep{zhang2025agent} / official benchmark repository & MIT & Effect-side attack evaluation \\
\bottomrule
\end{tabular}
\end{table}

\subsection{Cross-model spot checks}
\label{app:cross-model}

The main paper keeps all full-scale baseline comparisons on a single backbone to preserve a common harness.
To test whether the security pattern is tied to that model, we also ran small SecureClaw-only spot replications on additional model families.
We use these as transfer checks rather than as full cross-model baseline rerankings.

\begin{table}[h]
\centering
\caption{SecureClaw cross-model transfer checks.}
\label{tab:cross-model}
\small
\begin{tabular}{p{0.16\textwidth}p{0.23\textwidth}p{0.11\textwidth}p{0.22\textwidth}p{0.12\textwidth}}
\toprule
Benchmark & Model & Rows & Security result & Utility \\
\midrule
ASB & Claude Sonnet 4 & 50 & ASR $=0/50=0.0\%$ & 94.0\% \\
ASB & Claude 3.5 Haiku & 50 & ASR $=0/50=0.0\%$ & 86.0\% \\
AgentLeak slice & Claude Sonnet 4 & 50 & $C1/C2/C5 = 0/0/0$ & 100.0\% \\
\bottomrule
\end{tabular}
\end{table}

These transfer checks preserve the qualitative security pattern off the primary backbone: SecureClaw keeps ASR at zero on both non-primary ASB runs and remains clean on the Claude Sonnet 4 AgentLeak slice. We use these as transfer checks rather than as full cross-model baseline rerankings.

\subsection{AgentLeak overview and per-channel leak rates}
\label{app:agentleak-channel}

The main paper uses the attacked parity lane's overall scenario-OR leak because it provides the closest single-number analogue to AgentDojo ASR and ASB ASR. Table~\ref{tab:agentleak-perchannel} gives the parity-lane per-channel breakdown and Table~\ref{tab:agentleak-channellane} reports the remaining channel lanes $C3/C4/C6$. SecureClaw's residual is C1-dominant: 15/496 parity-lane cases remain on C1, one isolated case appears on C2, and C5 stays at 0/496. 

\begin{table}[h]
\centering
\caption{AgentLeak parity-lane per-channel leak rates (\%) $\downarrow$ ($n=496$ attacked rows).}
\label{tab:agentleak-perchannel}
\small
\begin{tabular}{lccc}
\toprule
System & C1 Leak & C2 Leak & C5 Leak \\
\midrule
Plain & 33.47 & 92.74 & 51.61 \\
IPIGuard & 56.25 & 50.00 & 57.06 \\
DRIFT & 51.21 & 47.98 & 53.63 \\
Faramesh & 33.47 & 92.74 & 51.61 \\
SecureClaw & \textbf{3.02} & \textbf{0.20} & \textbf{0.00} \\
\bottomrule
\end{tabular}
\end{table}

\begin{table}[h]
\centering
\caption{AgentLeak channel-lane results: attack leak rate (\%) $\downarrow$ / benign allow rate (\%) $\uparrow$.}
\label{tab:agentleak-channellane}
\small
\begin{tabular}{l cc cc cc}
\toprule
& \multicolumn{2}{c}{C3} & \multicolumn{2}{c}{C4} & \multicolumn{2}{c}{C6} \\
\cmidrule(lr){2-3}\cmidrule(lr){4-5}\cmidrule(lr){6-7}
System & Leak$\downarrow$ & Allow$\uparrow$ & Leak$\downarrow$ & Allow$\uparrow$ & Leak$\downarrow$ & Allow$\uparrow$ \\
\midrule
Plain & 100.0 & 100.0 & 100.0 & 100.0 & 100.0 & 100.0 \\
IPIGuard & 24.03 & 100.0 & 33.61 & 100.0 & 0.00 & 100.0 \\
DRIFT & 6.98 & 99.21 & 14.75 & 99.80 & 0.00 & 98.90 \\
Faramesh & 100.0 & 100.0 & 100.0 & 100.0 & 100.0 & 100.0 \\
SecureClaw & \textbf{0.00} & \textbf{100.0} & \textbf{0.00} & \textbf{100.0} & \textbf{0.00} & 99.90 \\
\bottomrule
\end{tabular}
\end{table}

For completeness, Table~\ref{tab:agentleak-utility} reports the official strict-evaluator utility counts underlying the main-text AgentLeak utility sentence.

\begin{table}[h]
\centering
\caption{AgentLeak parity-lane utility counts for SecureClaw.}
\label{tab:agentleak-utility}
\small
\begin{tabular}{lcc}
\toprule
Slice & Successes & Rate \\
\midrule
Attacked rows & $370/496$ & 74.6\% \\
Benign rows & $385/500$ & 77.0\% \\
Overall & $755/996$ & 75.8\% \\
\bottomrule
\end{tabular}
\end{table}

These tables make the paper's security decomposition concrete.
Faramesh matches SecureClaw on ASB-style effect blocking but remains indistinguishable from Plain on $C2$ and $C5$ because it does not remove plaintext from the runtime.
SecureClaw reduces $C2$ leakage to $1/496$ and eliminates measured leakage on $C3$, $C4$, $C5$, and $C6$, while leaving $C1$ as the dominant residual channel.
That is exactly the behavior predicted by the architecture.

\subsection{Residual taxonomy}
\label{app:residual-taxonomy}

Residual analysis is most useful when it explains what remains, not when it simply repeats counts.
Table~\ref{tab:residual-taxonomy} groups the observed residuals by security meaning.

\begin{table}[h]
\centering
\caption{Manual residual taxonomy from the full runs and targeted cross-checks.}
\label{tab:residual-taxonomy}
\small
\begin{tabular}{p{0.20\textwidth}p{0.12\textwidth}p{0.22\textwidth}p{0.31\textwidth}}
\toprule
Evidence source & Residual count & Manual class & Interpretation \\
\midrule
AgentDojo full attack run & 3/629 & foreign-principal/missing-contextual-bind & The action family is allowed, but the target principal is attacker-chosen; richer principal and object binding is needed. \\
AgentDojo full attack run & 1/629 & broader-scope workspace action & A workspace mutation remains individually allowed yet wrong for the task; coarse allow bits are insufficient. \\
AgentLeak attacked parity lane & 16/496 residuals & authorized-output-dominant residual & Fifteen cases remain on the final authorized output channel and one on C2; no C5 relay reappears. \\
\bottomrule
\end{tabular}
\end{table}

The residual profile is concentrated and interpretable. We do not claim that all $15/496$ surviving $C1$ cases share the same cause; the evidence here supports only the narrower conclusion that the residual is concentrated on the final authorized-output plane rather than on reopened internal relay.

\subsection{Ablation extended results}
\label{app:ablation-extended}

We report here the same matched slice used in the main paper after filtering rows with valid channel instrumentation: 48 valid rows per configuration (24 attack and 24 benign).
This slice is mechanism-focused and is not intended to reproduce the full AgentLeak Plain distribution; in particular, the $-\!$Both row should be read as a matched-slice internal control rather than as the Table~\ref{tab:cross-bench} Plain baseline.
The point of the ablation is not to optimize utility.
It is to identify the active mechanism behind the main-paper leakage reductions using a single denominator convention throughout.

\begin{table}[h]
\centering
\caption{Extended ablation on the matched AgentLeak slice. C2/C5 are reported on the 24 attacked rows; benign leak is reported on the 24 benign rows.}
\label{tab:ablation-extended}
\small
\resizebox{\textwidth}{!}{%
\begin{tabular}{lccccc}
\toprule
Config & Boundary & Handles & C2 Leak & C5 Leak & Benign Leak \\
\midrule
Full & \checkmark & \checkmark & $0/24$ (0.0\%) & $0/24$ (0.0\%) & $0/24$ (0.0\%) \\
$-$Boundary & $\times$ & \checkmark & $0/24$ (0.0\%) & $0/24$ (0.0\%) & $0/24$ (0.0\%) \\
$-$Handles & \checkmark & $\times$ & $5/24$ (20.8\%) & $2/24$ (8.3\%) & $4/24$ (16.7\%) \\
$-$Both & $\times$ & $\times$ & $6/24$ (25.0\%) & $1/24$ (4.2\%) & $4/24$ (16.7\%) \\
\bottomrule
\end{tabular}
}
\end{table}

Turning off the boundary while keeping handles leaves $C2$ and $C5$ at zero, which shows that those leak reductions do not come from the execution boundary.
Turning off handles while keeping the boundary reopens $C2$ to $5/24$ attacked rows and $C5$ to $2/24$ attacked rows; removing both gives $6/24$ and $1/24$ respectively.
The converse point is validated by the bypass suite: handles alone do not secure effectful sinks.
On this small matched slice, the clearest statistical signal is on $C2$ (Full vs.\ $-$Handles: two-sided Fisher $p=0.0496$; Full vs.\ $-$Both: $p=0.0219$), whereas the analogous $C5$ counts are too small for a strong significance claim.
We therefore interpret the ablation as evidence that handle confinement is necessary to suppress internal-relay leakage, not as a claim that every leakage column is independently significant at this sample size.

\subsection{Recovery ablation details}
\label{app:recovery-ablation}

\begin{table}[h]
\centering
\caption{Recovery ablation on two matched sweeps. All configurations retain 0\% ASR.}
\label{tab:recovery-ablation}
\small
\begin{tabular}{llccccc}
\toprule
Benchmark & Config & $n$ & ASR & Utility & Refusals & Extra turns/task \\
\midrule
ASB & recovery on & 50 & 0.0\% & 86.0\% & 8.0\% & --- \\
ASB & recovery off & 50 & 0.0\% & 70.0\% & 30.0\% & --- \\
\midrule
AgentLeak-style & recovery on & 30 & 0.0\% & 66.7\% & 66.7\% & 1.33 \\
AgentLeak-style & recovery off & 30 & 0.0\% & 40.0\% & 60.0\% & 1.00 \\
\bottomrule
\end{tabular}
\end{table}

The two sweeps are consistent.
Recovery lifts utility by 16.0 points on ASB and 26.7 points on the matched AgentLeak-style subset, with no ASR cost in either case.
On the instrumented subset it adds only about 0.33 extra tool turns per task, which is small relative to the LLM-driven workflows we target.

\paragraph{Per-attack breakdown.}
With recovery enabled, utility is 60\% on naive, 80\% on \texttt{escape\_characters}, 100\% on \texttt{fake\_completion}, 100\% on \texttt{context\_ignoring}, and 90\% on \texttt{combined\_attack}.
Without recovery, the largest drops occur on naive and \texttt{escape\_characters}, where the denial message helps the runtime restart from a cleaner plan.
The key point is straightforward: SecureClaw's high ASB utility comes from safe continuation after denial, not from weaker enforcement.

\subsection{Latency details}
\label{app:latency-details}

\begin{table}[h]
\centering
\caption{Authorization latency per action.}
\label{tab:latency}
\small
\begin{tabular}{lccc}
\toprule
Configuration & Avg (ms) & p50 (ms) & p95 (ms) \\
\midrule
Distributed policy engine + executor & 148.6 & 159.7 & 166.6 \\
Single policy service + executor & 15.4 & 12.8 & 32.2 \\
Executor only & 0.22 & 0.09 & 0.55 \\
\bottomrule
\end{tabular}
\end{table}

These numbers support the main-paper claim that the current overhead is practical rather than dominant.
The executor itself is essentially free; the distributed policy-engine instantiation is where most of the added latency lives.
For LLM-driven workflows with 500--2000 ms inference latencies, this is a systems cost worth optimizing rather than a deployment blocker.

\subsection{Commit-path bypass suite details}
\label{app:bypass-details}

The bypass suite validates the implementation against the compromised-runtime threat model.
It intentionally goes beyond prompt injection and directly mutates authorization artifacts, timestamps, request bindings, and replay state. The suite includes both core single-service cases and a small number of optional distributed-variant checks.

\begin{table}[h]
\centering
\caption{Commit-path bypass suite, including optional distributed-variant cases: 29/29 expected outcomes (25 attacks denied, 4 controls allowed).}
\label{tab:e1-details}
\small
\setlength{\tabcolsep}{4pt}
\resizebox{\textwidth}{!}{
\begin{tabular}{llcc}
\toprule
Category & Bypass attempt & Expected & Observed \\
\midrule
\multirow{4}{*}{Authorization-share integrity}
  & Bit-flipped MAC in authorization share & DENY & DENY \\
  & Wrong share version ($v=2$) & DENY & DENY \\
  & Server-ID swap ($\mathit{sid}_0=1$) & DENY & DENY \\
  & Cross-commit mixup (mismatched authorization shares) & DENY & DENY \\
\midrule
\multirow{3}{*}{Missing shares}
  & Missing second authorization share & DENY & DENY \\
  & Empty commit (no authorization shares) & DENY & DENY \\
  & Null policy entries & DENY & DENY \\
\midrule
\multirow{5}{*}{Binding violations}
  & Request hash mismatch (text tamper) & DENY & DENY \\
  & Recipient tampered after commit & DENY & DENY \\
  & Channel tampered after commit & DENY & DENY \\
  & Domain tampered after commit & DENY & DENY \\
  & Action ID mismatch & DENY & DENY \\
\midrule
\multirow{3}{*}{Session/caller}
  & Session ID mismatch & DENY & DENY \\
  & Caller ID mismatch & DENY & DENY \\
  & Cross-session replay & DENY & DENY \\
\midrule
\multirow{3}{*}{Context binding}
  & External principal injection & DENY & DENY \\
  & Delegation-token identifier (JTI) injection & DENY & DENY \\
  & Context-hash injection & DENY & DENY \\
\midrule
\multirow{3}{*}{Temporal}
  & Expired freshness window (past) & DENY & DENY \\
  & Future-timestamped authorization & DENY & DENY \\
  & Replay of consumed action\_id & DENY & DENY \\
\midrule
  & Missing action\_id (empty) & DENY & DENY \\
\midrule
\multirow{2}{*}{Chained attacks}
  & Expired authorization + text tamper & DENY & DENY \\
  & MAC flip + recipient change & DENY & DENY \\
\midrule
  & Confirm-gated: no user\_confirm & DENY & DENY \\
\midrule
\multirow{4}{*}{Positive controls}
  & Valid dual authorization (primary) & ALLOW & ALLOW \\
  & Valid second request (different recipient) & ALLOW & ALLOW \\
  & Valid third request (different text) & ALLOW & ALLOW \\
  & Confirm-gated: with user\_confirm & ALLOW & ALLOW \\
\bottomrule
\end{tabular}
}
\end{table}

This table is the empirical companion to Theorem~\ref{thm:nbe}: every mutated path that would have undermined request-bound authorization is denied, and every legitimate control path still succeeds.

\subsection{Payment case-study details}
\label{app:payment-details}

\begin{table}[h]
\centering
\caption{Payment case study: the boundary blocks 4/4 unauthorized commits.}
\label{tab:payment-case}
\small
\begin{tabular}{lccc}
\toprule
Scenario & Attack vector & Boundary on & Boundary off \\
\midrule
Legitimate payment & Valid authorization & ALLOW & ALLOW \\
Amount manipulation & \$100 $\rightarrow$ \$10{,}000 & \textbf{DENY} & ALLOW \\
Recipient redirect & $\rightarrow$ attacker@evil.com & \textbf{DENY} & ALLOW \\
Forged authorization & Bit-flip MAC & \textbf{DENY} & ALLOW \\
Replay attack & Replay prior commit & \textbf{DENY} & ALLOW \\
\bottomrule
\end{tabular}
\end{table}

The payment case study is not the paper's only scenario, but it is a clean concrete illustration of the effect surface.
Once commit authority moves to the executor, post-preview mutation of amount, recipient, or authorization state no longer changes the committed action.

\section{Formal Security Guarantees}
\label{app:proofs}

This appendix makes the paper's formal contract explicit.
It states the guarantees SecureClaw is designed to enforce, isolates the assumptions needed for each one, and calibrates the deployed summary interface used in experiments.
Throughout, the runtime is adversarial unless stated otherwise.
The purpose is precision rather than breadth: the proofs justify the architecture under the stated trust boundary, while the experiments determine which component is responsible for the measured benchmark gains.

\subsection{System model, notation, and interfaces}
\label{app:proofs:primitives}

Let $\lambda$ denote the implicit security parameter; concrete bounds below use the handle length $\kappa$ and explicit oracle counts.
SecureClaw has five logical roles: an untrusted runtime $\mathcal{R}$, a trusted gateway $\mathcal{G}$, a trusted handle store $\mathcal{S}$, a trusted policy engine $\mathcal{P}$, and a trusted executor $\mathcal{X}$. Appendix~\ref{app:proofs:policy} analyzes an optional distributed realization in which the policy engine is instantiated as two evaluators $\mathcal{P}_0,\mathcal{P}_1$.
The threat model allows full corruption of $\mathcal{R}$.
The gateway, handle store, policy engine, and executor are trusted for the core authorization and confinement guarantees; Section~\ref{app:proofs:compromise} makes explicit which guarantees fail if any of those components is compromised. For the per-evaluator transcript-privacy result stated later, we refine the policy assumption: in the distributed realization, each evaluator follows the protocol and the pair does not collude.

\paragraph{Canonical request and binding digest.}
An effectful request is the tuple
\[
\rho = (\mathsf{intent}, \mathsf{caller}, \mathsf{session}, \mathsf{inputs}, \mathsf{ctx}),
\]
where $\mathsf{ctx}$ must contain every sink-relevant field needed to determine the realized external effect. Representative fields include recipient or attendee sets, channels, object identifiers, confirmation class, delegation identifiers, provenance hashes, policy/schema versions, and version identifiers for mutable referenced handles or objects.

The gateway and executor serialize requests using the same deterministic canonicalization procedure $\mathsf{Canon}(\cdot)$.
The binding digest is
\[
  h(\rho) := \mathrm{Bind}_{k_{\mathrm{bind}}}(\rho)
  := \mathrm{HMAC\text{-}SHA256}(k_{\mathrm{bind}}, \mathsf{Canon}(\rho)).
\]
The security role of $h(\rho)$ is narrow and important: it turns the executor's question from ``is this action type generally allowed?'' into ``is this exact canonical request the one that was authorized?''

\paragraph{Preview and commit interfaces.}
On preview, the gateway canonicalizes $\rho$, queries policy, and returns an action identifier $a$, a program identifier $p$, and an authorization artifact
\[
  \pi = (\mathsf{meta}, t), \qquad
 t = \mathsf{Tag}_{K_{\mathrm{auth}}}(\mathsf{Canon}(\mathsf{meta})),
 \]
where $\mathsf{meta}$ binds $(a,p,h,\tau,y)$, $\tau$ is the freshness timestamp/window material, and $y$ is the policy decision. The authentication key $K_{\mathrm{auth}}$ is distinct from the request-binding key $k_{\mathrm{bind}}$; SecureClaw assumes key separation and domain separation between artifact authentication and request binding.
On commit, the executor recomputes $h(\rho)$ and accepts only if all of the following hold:
\begin{enumerate}
    \item the authorization artifact carries a valid MAC under the policy-service authentication key;
    \item the artifact binds the tuple $(a,p,h)$;
    \item the recomputed digest equals the digest bound inside the artifact;
    \item timestamps fall inside the freshness window;
    \item the action identifier has not already been consumed by the replay guard;
    \item the reconstructed policy result is \texttt{allow};
    \item if the action requires user confirmation, a confirmation token bound to $(a,p,h)$ is present; and
    \item any required handle-resolution or sink-specific validation succeeds before the effectful sink is reached.
\end{enumerate}
Any failure is fail-closed. Appendix~\ref{app:proofs:policy} gives the optional distributed specialization that replaces $\pi$ with two separately authenticated shares for the same digest.

\paragraph{Handles, dereference, and read interfaces.}
Sensitive values identified by the deployment schema or policy are stored in a trusted handle table. Under-classified sensitive fields are outside the handle-confinement claim.
A handle record contains a fresh handle identifier $\mathsf{hid}$ together with caller, session, object version, time-to-live, and sink constraints.
The runtime may carry, store, or transmit $\mathsf{hid}$ as a symbol, but only trusted components may dereference it.
The deployed read interface returns $\mathsf{hid}$ together with an explicit bounded declassification $D(v)$.
For proof structure we also refer to a handle-only reference case in which the runtime receives only $\mathsf{hid}$ and non-sensitive metadata.
All benchmark results in the paper use the bounded-summary interface because current public benchmarks require natural-language read access for many tasks.

\paragraph{Effectful sinks and explicit declassification.}
An effectful sink is any interface that can commit an irreversible or externally visible side effect.
An explicit declassification event is any interface that intentionally releases information from a protected value to the runtime, including the read-summary interface and tightly scoped recovery messages.
Everything else is treated as non-declassifying by default.

\subsection{Assumption partition}
\label{app:proofs:assumptions}

Table~\ref{tab:assumption-partition} separates assumptions by the property they support.
This partition matters because SecureClaw is not a one-primitive story: the execution boundary, the handle layer, the summary interface, and the optional distributed policy-engine refinement rely on different technical premises.

\begin{table}[h]
\centering
\small
\setlength{\tabcolsep}{5pt}
\renewcommand{\arraystretch}{1.18}
\begin{tabularx}{\linewidth}{>{\bfseries\RaggedRight\arraybackslash}p{0.18\linewidth}YY}
\toprule
Property & Security objective & Assumptions \\
\midrule
Execution boundary
& No unauthorized side effects.
& MAC EUF-CMA; deterministic canonicalization; bounded request-binding collision term $\varepsilon_{\mathsf{bind}}(q_{\mathsf{bind}})$ over all digest evaluations in the game; confirmation binding when required; bounded clock drift; atomic replay check-and-mark; complete mediation of effectful sinks. \\
\midrule
Handle confinement
& No plaintext reaches the runtime except through declared interfaces.
& Complete mediation of sensitive reads; correct classification of protected fields according to the deployment schema/policy; handle identifiers are sampled uniformly from $\{0,1\}^{\kappa}$ with $\kappa \ge 128$ or generated pseudorandomly from a fresh nonce under a secret handle key with PRF loss $\varepsilon_{\mathsf{hid}}$; unsuccessful dereference attempts reveal only a generic denial bit; at most $N_{\mathsf{live}}$ valid handles are outstanding at a time; dereference is allowed only at trusted endpoints. \\
\midrule
Bounded-summary interface
& Quantified exposure from the deployed read interface.
& Fixed summary operator $D$ for the evaluated deployment; any optional post-processor depends only on $D(v)$ and independent randomness; leakage to the runtime occurs only through metadata, handles, and authorized summaries. \\
\midrule
Distributed policy evaluation
& Single-evaluator transcript privacy.
& Two-server private information retrieval (PIR) security against one server; honest-but-curious secure multi-party computation (MPC) privacy; passive non-colluding servers; fixed-shape routing; transcript leakage is limited to $L_{\mathsf{policy}}$. \\
\bottomrule
\end{tabularx}
\caption{Assumption partition by security property.}
\label{tab:assumption-partition}
\end{table}

\paragraph{Assumptions not made.}
The formal results do not assume an honest runtime, successful prompt filtering, or semantically perfect summaries.
The runtime is adversarial throughout.
Summary exposure is modeled explicitly through the operator $D$ rather than assumed away.
The transcript-privacy result for distributed policy evaluation is deliberately scoped to the non-colluding, single-evaluator-view setting stated below.

\subsection{Request-bound authorization integrity}
\label{app:proofs:nbe}

\paragraph{Security game.}
The challenger samples the artifact-authentication key $K_{\mathrm{auth}}$ and request-binding key $k_{\mathrm{bind}}$ and exposes a preview oracle $\mathcal{O}_{\mathsf{pv}}(\rho)$ and a commit oracle $\mathcal{O}_{\mathsf{cm}}(\rho,\pi)$.
On preview input $\rho$, $\mathcal{O}_{\mathsf{pv}}$ returns a fresh action identifier $a$, a program identifier $p$, a policy decision $y$, and a valid authorization artifact $\pi$, while recording $(a,p,h(\rho),y,\rho)$ in a set $\mathcal{S}$.
On commit input $(\rho,\pi)$, $\mathcal{O}_{\mathsf{cm}}$ runs the executor verification logic, including freshness and atomic replay check-and-mark, returns only accept/deny, and records accepted action identifiers.
The adversary may adaptively query both oracles.
For an accepting commit query, write the adversary-supplied input as $(\rho^\star,\pi^\star)$ and the accepted metadata as $(a^\star,p^\star,h^\star)$.
The adversary wins if any accepting commit query satisfies at least one of the following:
\begin{enumerate}
    \item \textbf{No authorization:} there is no tuple $(a^\star,p^\star,h^\star,\texttt{allow},\rho') \in \mathcal{S}$ matching the accepted authorization metadata.
    \item \textbf{Binding violation:} there exists $(a^\star,p^\star,h^\star,\texttt{allow},\rho') \in \mathcal{S}$, but $\mathsf{Canon}(\rho^\star) \neq \mathsf{Canon}(\rho')$ or the two requests realize different sink effects while both are accepted under the same bound digest.
    \item \textbf{Replay:} an already consumed action identifier is accepted again.
\end{enumerate}

Let $q_{\mathsf{bind}}$ denote the total number of canonical requests whose binding digest is ever evaluated in the game, including all adversarial commit candidates.

\begin{theorem}[Request-bound authorization integrity]
\label{thm:nbe}
Assume that
(i) the MAC is EUF-CMA secure,
(ii) $\mathrm{Canon}(\cdot)$ is deterministic, identical at the gateway and executor, uniquely decodable for authenticated metadata, and semantically complete for mediated sink effects: for requests reaching the same mediated sink, equal canonical strings imply the same realized external effect,
(iii) $\mathsf{ctx}$ includes all sink-relevant fields that can change the realized external effect, including version identifiers for mutable referenced objects when applicable,
(iv) the request-binding digest has bounded collision term $\varepsilon_{\mathsf{bind}}(q_{\mathsf{bind}})$ over distinct canonical byte strings,
(v) any required confirmation token is bound to $(a,p,h)$,
(vi) freshness checks are enforced,
(vii) replay check-and-mark is atomic and crash-safe, and
(viii) all effectful sinks are mediated by the executor.
Then for any PPT adversary $\mathcal{A}$, there exists a MAC-forgery reduction $\mathcal{B}$ such that
\[
\Pr[\mathcal{A}\text{ wins}]
\le
\mathrm{Adv}^{\mathrm{euf\text{-}cma}}_{\mathsf{MAC}}(\mathcal{B})
+
\varepsilon_{\mathsf{bind}}(q_{\mathsf{bind}}).
\]
\end{theorem}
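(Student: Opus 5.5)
The proof is a standard game-hopping argument organized around a single bad event. Let $\mathsf{Forge}$ be the event that some accepting commit query carries an artifact $\pi^\star=(\mathsf{meta}^\star,t^\star)$ whose $\mathsf{Canon}(\mathsf{meta}^\star)$ was never tagged by the preview oracle. Let $\mathsf{Coll}$ be the event that two distinct canonical byte strings among the $q_{\mathsf{bind}}$ evaluated requests receive the same binding digest. The claim is that, outside $\mathsf{Forge}\cup\mathsf{Coll}$, none of the three winning conditions can occur. A union bound then gives
\[
\Pr[\mathcal{A}\text{ wins}]\le\Pr[\mathsf{Forge}]+\Pr[\mathsf{Coll}],
\]
with $\Pr[\mathsf{Coll}]\le\varepsilon_{\mathsf{bind}}(q_{\mathsf{bind}})$ by assumption (iv).

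\textbf{Bounding $\Pr[\mathsf{Forge}]$.} We build $\mathcal{B}$ against EUF-CMA. $\mathcal{B}$ samples $k_{\mathrm{bind}}$ itself, which is legitimate because key separation makes it independent of $K_{\mathrm{auth}}$. It answers previews by forming $\mathsf{meta}=(a,p,h(\rho),\tau,y)$ and querying its tagging oracle on $\mathsf{Canon}(\mathsf{meta})$. It answers commits by running the executor logic, using the verification oracle, or simply guessing the first commit whose metadata is fresh. When $\mathsf{Forge}$ occurs, $\mathcal{B}$ outputs that metadata together with its tag. Unique decodability from (ii) guarantees that a fresh $\mathsf{meta}^\star$ gives a fresh message, so the output is a valid forgery. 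This is the step where the authenticated metadata must be parsed unambiguously, so that no adversary can reinterpret a tagged string as a different $(a,p,h,\tau,y)$.

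\textbf{Case analysis assuming neither bad event occurs.} Fix an accepting commit. Since $\neg\mathsf{Forge}$, $\mathsf{meta}^\star$ equals some previewed $\mathsf{meta}$. Check 6 forces $y=\texttt{allow}$, and the executor checks 2 and 3 give a recorded tuple $(a^\star,p^\star,h^\star,\texttt{allow},\rho')\in\mathcal{S}$ with $h(\rho^\star)=h^\star=h(\rho')$. This rules out condition 1. For condition 2, $\neg\mathsf{Coll}$ together with equal digests gives $\mathsf{Canon}(\rho^\star)=\mathsf{Canon}(\rho')$. Assumptions (ii)--(iii) then give the same realized sink effect, and (viii) ensures no effect escapes mediation. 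Condition 3 is excluded directly by the atomic, crash-safe check-and-mark in (vii) on the action identifier $a^\star$. Confirmation tokens (v) and freshness (vi) only restrict acceptance further, so they cannot create wins; they are needed only so that executor acceptance matches the modeled logic.

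\textbf{Main obstacle.} The delicate step is the collision accounting. Digests of adversarial commit candidates are computed under a secret key the adversary never sees, and they are compared against previewed digests. I will state that $\mathsf{Coll}$ ranges over all $q_{\mathsf{bind}}$ evaluations, previews and commits alike, so that assumption (iv) applies verbatim rather than requiring a separate PRF argument. I also need the simulation by $\mathcal{B}$ to be perfect up to the forgery event; in particular, deny responses must not depend on $K_{\mathrm{auth}}$ except through verification.
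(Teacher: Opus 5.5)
Your proposal is correct and follows essentially the same route as the paper: your events $\mathsf{Forge}$ and $\mathsf{Coll}$ are exactly the abort conditions of its hops $G_0\to G_1$ (charged to EUF-CMA) and $G_1\to G_2$ (charged to $\varepsilon_{\mathsf{bind}}(q_{\mathsf{bind}})$), and you then appeal, as the paper does, to semantic completeness of $\mathrm{Canon}$, atomic replay check-and-mark, and complete mediation, while spelling out $\mathcal{B}$ and the role of unique decodability more explicitly than the paper does. One caution: only the verification-oracle variant of $\mathcal{B}$ gives the tight bound as stated, because guessing which fresh-metadata commit query is the forgery loses a factor equal to the number of commit queries.
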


\begin{proof}
We use a short sequence of games.

\textbf{Game $G_0$.}
This is the real authorization game.

\textbf{Game $G_1$.}
Abort if the executor accepts an authorization artifact for an authenticated message $\mathsf{Canon}(\mathsf{meta})$ that was not previously signed by the preview oracle. The change from $G_0$ to $G_1$ is bounded by
\[
\mathrm{Adv}^{\mathrm{euf\text{-}cma}}_{\mathsf{MAC}}(\mathcal{B}).
\]

\textbf{Game $G_2$.}
In addition to the previous abort rule, abort if an accepting commit for $\rho$ uses a bound digest that was previously recorded for a previewed request $\rho'$ with $\mathrm{Canon}(\rho) \ne \mathrm{Canon}(\rho')$, or if two accepted commit requests have distinct canonical byte strings under the same bound digest. The change from $G_1$ to $G_2$ is at most $\varepsilon_{\mathsf{bind}}(q_{\mathsf{bind}})$. The case in which two requests have the same canonical string but different realized sink effects is ruled out by the semantic completeness premise for canonicalization, not by the digest-collision term.
 
Conditioned on no abort in \(G_2\), every accepting commit query corresponds to a previously previewed authenticated artifact for the same bound digest and an allow decision. Because all sink-relevant fields appear in \(\mathsf{Canon}(\rho)\), any post-preview mutation that changes the realized external effect induces a distinct canonical byte string unless it is absorbed by a MAC forgery, a binding collision, or an implementation bug in the stated canonicalization premise. Under the remaining operational premises, including freshness enforcement, replay-safe consumption, confirmation binding when required, and complete mediation of effectful sinks, the executor can therefore accept only the exact previously previewed request once. None of the adversary's winning conditions can occur. A union bound over the game hops gives the stated inequality.
\end{proof}

\paragraph{Instantiating the binding term.}
The proof keeps $\varepsilon_{\mathsf{bind}}(q_{\mathsf{bind}})$ explicit because the architecture only needs a request-binding digest with negligible collision probability over the canonical request space.
In the evaluated system, $\mathrm{Bind}$ is instantiated with HMAC-SHA256 over canonical bytes.
Under the usual PRF-style heuristic for HMAC, $\varepsilon_{\mathsf{bind}}(q_{\mathsf{bind}})$ behaves like a standard collision term in the digest length; the theorem does not require a stronger, non-standard assumption.

Theorem~\ref{thm:nbe} captures unauthorized execution relative to the bound request. It does not claim that every policy-allowed request is semantically aligned with the user's latent objective; wrong-principal or wrong-object actions inside the policy-allowed region therefore lie outside its scope. Freshness, replay safety, confirmation binding, and complete mediation are deployment premises rather than cryptographic loss terms. The bypass evaluation later in the appendix stress-tests these premises empirically.

\paragraph{Single-service scope and optional distributed variant.}
The theorem above states the core single-service guarantee used throughout the main paper. Appendix~\ref{app:proofs:policy} analyzes an optional distributed policy-evaluation refinement; it does not change the effect/plaintext split or the meaning of request binding.

\subsection{Runtime confidentiality under opaque handles}
\label{app:proofs:sm}

\paragraph{Security game.}
The challenger initializes the trusted handle store and gives the adversary full control of the runtime.
The adversary may request protected reads, move handles across internal channels, and attempt guessed dereferences through any interface that is supposed to reject unknown or unauthorized handles.
It wins if it distinguishes which of two challenger-chosen secret values is stored behind a protected handle with non-negligible advantage, without triggering an explicit declassification event.

\paragraph{Base leakage function.}
Let $L_{\mathsf{handles}}$ include handle identifiers issued to the runtime, non-sensitive type metadata, TTL values, policy decision bits, and any declared recovery templates.
It excludes all other functions of sensitive plaintext.

\begin{theorem}[Handle-only confinement]
\label{thm:sm}
Assume the handle-only reference case: the runtime receives only $\mathsf{hid}$ and non-sensitive metadata; handle identifiers are sampled independently and uniformly from $\{0,1\}^{\kappa}$ with $\kappa \ge 128$ or generated by a PRF from a fresh nonce under a secret handle key, unsuccessful dereference attempts reveal only a generic denial bit, and at most $N_{\mathsf{live}}$ valid handles are simultaneously outstanding in the runtime's view. Let $\varepsilon_{\mathsf{hid}}=0$ for uniform handles and let $\varepsilon_{\mathsf{hid}}$ be the PRF distinguishing loss for PRF-generated handles.

Then for any adversary making at most $q_h$ online handle-guessing attempts, the runtime's view is simulatable from $L_{\mathsf{handles}}$ up to additive advantage
\[
\min\!\left(1,\frac{q_h N_{\mathsf{live}}}{2^{\kappa}}\right)+\varepsilon_{\mathsf{hid}}.
\]
\end{theorem}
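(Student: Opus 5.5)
The proof is a short game-hopping argument ending in a simulator $\mathsf{Sim}$ that sees only $L_{\mathsf{handles}}$. First we fix what the runtime's view contains in the handle-only reference case. It consists of four things: the issued handle identifiers, the non-sensitive type metadata, TTL values, and policy decision bits and recovery templates (all in $L_{\mathsf{handles}}$ by definition), plus the answers to at most $q_h$ online dereference or guessing attempts. Complete mediation of sensitive reads and the fact that dereference happens only at trusted endpoints mean no other interface returns plaintext-dependent data. The only way secret-dependent information can enter the view is therefore through the response to a guessed dereference.

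\textbf{Game $G_0$} is the real game. \textbf{Game $G_1$} handles the PRF-generated case. There we replace $\mathrm{HMAC\text{-}SHA256}(k_{\mathsf{handle}},\cdot)$ on fresh nonces by a truly random function, so identifiers become independent uniform strings in $\{0,1\}^{\kappa}$. The standard reduction bounds the distinguishing gap by $\varepsilon_{\mathsf{hid}}$, and the gap is $0$ when handles are already uniform. The reduction answers the runtime's queries using the PRF oracle and the trusted store it simulates itself. Fresh nonces guarantee that no input repeats, so the random function yields independent uniform values.

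\textbf{Game $G_2$} aborts if some guessing attempt submits a string equal to a live valid handle that was not issued to the runtime through a legitimate read, or that is otherwise not authorized for that caller, session and sink. Any successful dereference of an issued and authorized handle at a trusted sink is not returned to the runtime as plaintext, so it contributes only a decision bit, which is already in $L_{\mathsf{handles}}$.

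We expect the main obstacle to be bounding the abort event in $G_2$, in particular accounting for adaptivity. The plan is to argue that, as long as no hit has occurred, every failed attempt returns the same generic denial bit. The adversary's view is then independent of the unrevealed identifiers, so each guess is a fixed string relative to identifiers that are still uniform among unseen values. Each guess hits one of at most $N_{\mathsf{live}}$ outstanding handles with probability at most $N_{\mathsf{live}}/2^{\kappa}$; strictly this is $N_{\mathsf{live}}/(2^{\kappa}-\text{seen})$, which is absorbed for $\kappa\ge128$. A union bound over the $q_h$ attempts, capped at $1$, gives $\min(1,q_hN_{\mathsf{live}}/2^{\kappa})$.

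Finally, in $G_2$ conditioned on no abort, $\mathsf{Sim}$ reproduces the view exactly. It outputs the identifiers and metadata from $L_{\mathsf{handles}}$, answers every guess with ``deny'', and returns the decision bits from $L_{\mathsf{handles}}$. None of this depends on which of the two challenge secrets is stored, so distinguishing advantage in $G_2$ is zero. Summing the two hops yields the stated bound $\min(1,q_hN_{\mathsf{live}}/2^{\kappa})+\varepsilon_{\mathsf{hid}}$.
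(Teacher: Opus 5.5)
Your proposal follows the paper's proof essentially step for step: a PRF-to-uniform hop costing $\varepsilon_{\mathsf{hid}}$, then a bad event in which some guess hits one of at most $N_{\mathsf{live}}$ live handles not issued to the runtime, bounded by a union bound over the $q_h$ guesses, then exact simulation from $L_{\mathsf{handles}}$ conditioned on no hit, and your identical-until-bad treatment of adaptivity is more careful than the paper's one-line union bound. Two small fixes are needed: drop the ``otherwise not authorized for that caller, session and sink'' disjunct from the $G_2$ abort, since the runtime can trigger it at will by submitting an issued handle at a non-permitted sink (such attempts are simply denied, are already simulatable, and must not be counted as bad), and drop the $2^{\kappa}-\text{seen}$ correction, because in the always-deny world each guess is independent of the unissued identifiers, which stay uniform on all of $\{0,1\}^{\kappa}$, so the per-guess bound is exactly $N_{\mathsf{live}}/2^{\kappa}$.
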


\begin{proof}
Construct a simulator that replaces every protected value with an independently sampled handle identifier and exposes only the metadata in $L_{\mathsf{handles}}$. If the implementation uses PRF-generated handles rather than direct uniform sampling, first replace those handles by uniform strings, incurring $\varepsilon_{\mathsf{hid}}$.

Because the handle-only reference case never releases plaintext directly to the runtime, the simulated view and the real view are identical unless the adversary successfully guesses a live handle identifier that was not already issued to it and uses that guess to trigger a privileged dereference path.

At any point there are at most $N_{\mathsf{live}}$ valid targets in the handle namespace.
A single uniform guess therefore succeeds with probability at most $N_{\mathsf{live}}/2^{\kappa}$.
By a union bound over $q_h$ online guesses, the probability of any successful hit is at most $\min(1,q_h N_{\mathsf{live}}/2^{\kappa})$.
Conditioned on no successful hit, every runtime-visible object is distributed exactly as in the simulator, so any additional distinguishing advantage would imply a plaintext flow outside the declared interface.
\end{proof}

\paragraph{Interpretation.}
Theorem~\ref{thm:sm} is a statement about where plaintext can exist.
It says that the handle-only reference case confines raw secret values to trusted components except through explicit declassification.
It does not, by itself, authorize where a valid handle may be used; those sink restrictions belong to the executor-side authorization path.

The original intuition behind handle unguessability is correct, but the correct bound depends on how many live handles the adversary could plausibly hit and, for PRF-generated identifiers, on the PRF loss.

Making this factor explicit avoids overstating confidentiality while still preserving the intended negligible-security conclusion for realistic handle namespace sizes.

\paragraph{Bounded-summary interface.}
\label{app:proofs:bounded-summary}
The deployed benchmark configuration uses a bounded-summary interface.
For a protected value $v$, the runtime receives
\[
  (\mathsf{hid}, D(v))
\]
on an authorized read.
The relevant question is therefore not whether the runtime learns anything about $v$; it does by design.
The question is how much additional distinguishing power the specific read interface introduces beyond base handle leakage.

\paragraph{Advantage definition.}
Let $q_r$ bound the number of authorized protected-read events available to the adversary. Consider a scoped execution trace with at most $m \le q_r$ authorized read events.
For read event $i$, the challenger prepares a pair of secrets $(v_{0,i},v_{1,i})$ with identical base leakage and, for a single hidden bit $b \in \{0,1\}$ shared across the trace, returns one sample from $D(v_{b,i})$.
Let $\mathrm{Adv}^{\mathrm{sm},D}_{\mathcal{A}}(\{(v_{0,i},v_{1,i})\}_{i=1}^{m})$ denote the adversary's left-right distinguishing gap for that trace, i.e., the maximum difference in output-one probability between the two hidden-bit experiments. Under the alternative success-probability convention, the advantage over random guessing is one half of this quantity.

\begin{theorem}[Quantitative exposure of the bounded-summary interface]
\label{thm:bounded-summary}
Assume the same handle conditions and $\varepsilon_{\mathsf{hid}}$ convention as in Theorem~\ref{thm:sm}. Let $D$ be the fixed deployed read interface. Consider a scoped trace in which each authorized read event $i$ emits one sample from $D(v_{b,i})$ using fresh randomness and no secret-dependent cross-read state beyond the emitted summaries themselves. Then for any sequence of read pairs $\{(v_{0,i},v_{1,i})\}_{i=1}^{m}$ with $m \le q_r$ and any PPT adversary making at most $q_h$ handle guesses,

\[
  \mathrm{Adv}^{\mathrm{sm},D}_{\mathcal{A}}(\{(v_{0,i},v_{1,i})\}_{i=1}^{m})
  \le
  \min\!\Bigl(1, \sum_{i=1}^{m} \Delta(D(v_{0,i}),D(v_{1,i}))\Bigr)
  +
  2\left(\min\!\left(1,\frac{q_h N_{\mathsf{live}}}{2^{\kappa}}\right)+\varepsilon_{\mathsf{hid}}\right),
\]
where $\Delta(\cdot,\cdot)$ denotes total variation distance between the per-read output distributions.
\end{theorem}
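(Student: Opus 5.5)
The proof is a hybrid argument that reduces to Theorem~\ref{thm:sm} for everything except the summaries. Fix the hidden bit $b$ and the read pairs $\{(v_{0,i},v_{1,i})\}_{i=1}^{m}$. The real view in experiment $b$ consists of three kinds of items: the handle-layer items (handle identifiers, non-sensitive metadata, denial bits, and outcomes of guessed dereferences), the emitted summaries $D(v_{b,1}),\dots,D(v_{b,m})$, and the adversary's own coins.

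\textbf{Step 1: replace the handle layer.} First I will apply the simulator of Theorem~\ref{thm:sm} to the handle layer in each experiment, leaving the summaries untouched. The pairs have identical base leakage, so this simulator depends only on $L_{\mathsf{handles}}$. That makes the simulated handle layer identically distributed for $b=0$ and $b=1$. Its only connection to the summaries is public scheduling, namely which read event occurs when. By Theorem~\ref{thm:sm}, each experiment moves by at most $\min(1,q_hN_{\mathsf{live}}/2^{\kappa})+\varepsilon_{\mathsf{hid}}$. I need this swap in both the $b=0$ and $b=1$ worlds, which gives the factor $2$ by the triangle inequality on output-one probabilities.

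One point needs care when invoking Theorem~\ref{thm:sm}. Its proof conditions on "no successful hit" and otherwise outputs a view independent of plaintext. Summaries are now also in the view, so I must check that they do not help guess handles. By the theorem's hypotheses, handles are uniform or PRF outputs computed independently of $v$, and summaries use fresh randomness independent of the handle key. The guessing bound therefore holds even for an adversary that sees the summaries, and the same union bound over $q_h$ guesses goes through.

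\textbf{Step 2: reduce to the summary vector.} In the hybrid world, the adversary's output is a (possibly randomized) function of $(D(v_{b,i}))_{i=1}^m$ plus $b$-independent material. By data processing, its distinguishing gap is at most the total variation distance between the joint summary distributions for $b=0$ and $b=1$.

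\textbf{Step 3: decompose the joint distance.} The hypothesis of fresh randomness and no secret-dependent cross-read state makes the joint law a product $\bigotimes_i D(v_{b,i})$. That holds when the pairs are fixed in advance. If the adversary chooses later pairs adaptively, I would instead walk through hybrids $H_j$, where the first $j$ reads use $v_{1,\cdot}$ and the rest use $v_{0,\cdot}$. Neighbouring hybrids differ in exactly one sample drawn with fresh randomness given the past, so a coupling argument bounds each step by $\Delta(D(v_{0,j}),D(v_{1,j}))$. Either way, subadditivity of total variation over product measures, or the telescoping sum over the hybrids, gives the bound $\sum_i\Delta(\cdot,\cdot)$. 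Total variation never exceeds $1$, which gives the $\min(1,\cdot)$ cap. Adding the Step~1 losses yields the stated inequality.

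\textbf{Main obstacle.} I expect Step~1 to be the hardest part: making precise that Theorem~\ref{thm:sm}'s simulator composes with an interface that leaks content. The delicate points are the independence of handle generation from the summaries and the fact that guessed-dereference denials reveal only a generic bit. If adaptive choice of the read pairs complicates Step~3, I would fall back to the fixed-sequence case, which is what the statement literally quantifies over.
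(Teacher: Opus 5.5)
Your proposal matches the paper's proof: it simulates the handle layer in both left-right worlds via Theorem~\ref{thm:sm} (which gives the factor $2$), then hybrids over the $m$ read events to sum the per-read total variation distances, capped at $1$. Your additional points, that the summaries are independent of handle generation so the guessing bound still applies and that the summary-vector dependence can be handled by data processing or a product-measure argument, make explicit what the paper's brief argument leaves implicit; they do not change the route.
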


\begin{proof}
Simulate the handle portion of the two left-right worlds as in Theorem~\ref{thm:sm}. Under the left-right gap convention used here, the corresponding bad-event contribution is at most $2(\min(1,q_h N_{\mathsf{live}}/2^{\kappa})+\varepsilon_{\mathsf{hid}})$.
 
Now hybrid over the $m$ authorized read events.
At hybrid step $i$, switch the $i$th summary sample from $D(v_{0,i})$ to $D(v_{1,i})$ while keeping all other read events fixed.
The distinguishing gap introduced by that single step is at most $\Delta(D(v_{0,i}),D(v_{1,i}))$.
Summing over all $m$ read events yields the first term, truncated at $1$.
Adding the two-world handle-guessing term gives the result.
\end{proof}

The same hybrid argument extends to adaptively scheduled read events by conditioning each step on the preceding transcript and bounding only the increment introduced at the next authorized summary.

\paragraph{Uniform-$\Delta$ special case.}
If every read event in the trace is upper-bounded by the same worst-case distance $\Delta_{\max}$, the theorem reduces immediately to
\[
\mathrm{Adv}^{\mathrm{sm},D}_{\mathcal{A}}
\le
\min(1, q_r \Delta_{\max})
+
2\left(\min\!\left(1,\frac{q_h N_{\mathsf{live}}}{2^{\kappa}}\right)+\varepsilon_{\mathsf{hid}}\right),
\]
which recovers the simpler intuition used informally throughout the paper.

\paragraph{Post-processing cannot increase distinguishability.}
Suppose the deployment implements the visible summary as $D = P \circ D_{\mathrm{core}}$, where $P$ is deterministic or randomized independently of the underlying secret conditional on $D_{\mathrm{core}}(v)$.
Then by the data-processing inequality,
\[
\Delta(D(v_0),D(v_1)) \le \Delta(D_{\mathrm{core}}(v_0),D_{\mathrm{core}}(v_1)).
\]
This is why measuring the deterministic core of the deployed summary operator is the right conservative calibration target, provided the post-processor does not retrieve new secret-bearing context.

\paragraph{Reading the bound.}
If $D$ is constant, then $\Delta = 0$ and the bounded-summary interface collapses to the handle-only reference case.
If $D$ is the identity, then $\Delta = 1$ and the confidentiality term becomes trivial.
Practical deployments lie between those extremes.
The point of the theorem is not to pretend the bounded-summary interface is secrecy-preserving in the same sense as the handle-only reference case; it is to quantify exactly where and how the read interface weakens confidentiality on a trace-by-trace basis.

\begin{table}[h]
\centering
\small
\begin{tabular}{p{0.17\textwidth}p{0.27\textwidth}p{0.10\textwidth}p{0.24\textwidth}}
\toprule
Summary class & Example output & Typical $\Delta$ & Interpretation \\
\midrule
Metadata only & type, size, TTL, sink list & 0 & No content-dependent distinction by construction \\
Structured schema & aliased identifiers, dates, amounts, bounded entity sets & 0 or small & Depends on whether the field is content-carrying or fully symbolized \\
Bounded natural-language summary & short subject/body/list excerpt & up to 1 & Useful for planning but may distinguish secrets directly \\
\bottomrule
\end{tabular}
\caption{Qualitative confidentiality behavior of summary-interface families.}
\label{tab:summary-class}
\end{table}

\paragraph{Measurement protocol for the deployed summary operator.}
The deterministic core of the deployed summary interface uses item cap $M{=}8$ and character cap $C{=}512$.
We evaluate it on $N=122$ paired inputs spanning 19 secret-location categories representative of the tool outputs that arise in the benchmarked tasks.
Each pair differs in exactly one sensitive location.
Because the measured core is deterministic, each pair induces point distributions and therefore yields $\Delta \in \{0,1\}$ depending on whether the two emitted summaries are byte-identical.
The measurement includes the deployed hardening that resets alias state per read, because without that reset the independence premise of Theorem~\ref{thm:bounded-summary} would be materially weaker. The reported average $\bar{\Delta}=0.287$ below is therefore a diagnostic over this 122-pair measurement suite. It is not a worst-case upper bound for arbitrary execution traces and should not be substituted for the theorem's sum of per-read distances.

\begin{table}[h]
\centering
\small
\begin{tabular}{p{0.44\textwidth}rr}
\toprule
Secret-location category & Pairs & $\Delta$ \\
\midrule
Identifier and structure-preserving categories \\
\quad Email sender swap & 10 & $0.00$ \\
\quad IBAN counterparty swap & 6 & $0.00$ \\
\quad URL swap & 6 & $0.00$ \\
\quad Domain swap & 4 & $0.00$ \\
\quad Participant list (emails) & 8 & $0.00$ \\
\quad IBAN counterparty, request-mentioned & 6 & $0.00$ \\
\quad Email counterparty, request-mentioned & 6 & $0.00$ \\
\quad Second-read alias index (multi-read turn) & 6 & $0.00$ \\
\quad Canonicalizable email/domain obfuscations & 3 & $0.00$ \\
\midrule
Residual in-window content categories \\
\quad Free-text body within $C=512$ & 12 & $1.00$ \\
\quad Subject line within preview cap & 6 & $1.00$ \\
\quad List items within $M=8$ & 8 & $1.00$ \\
\quad Plain-text transaction counterparty name & 6 & $1.00$ \\
\quad Ambiguously malformed identifier & 3 & $1.00$ \\
\midrule
Beyond-window or stripped-content categories \\
\quad Body content beyond $C=512$ & 8 & $0.00$ \\
\quad List items beyond index $M=8$ & 8 & $0.00$ \\
\quad Dict keys beyond index $M=8$ & 6 & $0.00$ \\
\quad Subject with suspicious markers (stripped) & 4 & $0.00$ \\
\quad Prompt-injection payload inside stripped tag & 6 & $0.00$ \\
\midrule
\textbf{Overall (122 pairs, 19 categories)} & & $\bar\Delta = 0.287$ \\
\textbf{Worst-case category} & & $1.000$ \\
\bottomrule
\end{tabular}
\caption{Per-category $\Delta$ for the deployed deterministic summary core.}
\label{tab:summary-tv-measured}
\end{table}

\paragraph{Why end-to-end C2/C5 closure is stronger than the $\Delta$ table alone.}
Even when a read summary is distinguishing, two stronger structural facts continue to hold in the deployed system:
\begin{enumerate}
    \item unauthorized commits still require executor-side authorization and are covered by Theorem~\ref{thm:nbe};
    \item internal channels such as C2 and C5 do not obtain newly dereferenced raw protected values; any content-dependent information they may carry must already originate from the authorized summary interface; and
    \item the read interface itself rate-limits exposure to one bounded summary per authorized read.
\end{enumerate}
The next proposition isolates that point.

\begin{proposition}[Three-layer composition for high-$\Delta$ summary categories]
\label{prop:three-layer}
Let $\mathcal{C}$ be a summary category with $\Delta(D(v_0),D(v_1))=1$.
Under the assumptions of Theorems~\ref{thm:nbe}, \ref{thm:sm}, and \ref{thm:bounded-summary}, an adversary controlling the runtime remains subject to three simultaneous constraints:
\begin{enumerate}
    \item no unauthorized side effect can commit;
    \item the internal relay channels C2 and C5 do not newly expose raw protected values beyond what was already released through the authorized bounded-summary interface; any residual content-dependent signal must already have been released through that interface; and
    \item disclosure through the read path is limited by the authorized, bounded summary interface itself.
\end{enumerate}
Therefore a high-$\Delta$ summary category weakens the per-read confidentiality term of Theorem~\ref{thm:bounded-summary}, but it does not reopen unauthorized commit or a fresh raw-plaintext relay from trusted storage into C2/C5.
\end{proposition}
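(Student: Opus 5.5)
The plan is to prove the three constraints separately, each as a direct consequence of one earlier theorem. The hypothesis $\Delta(D(v_0),D(v_1))=1$ will be used only in the final remark. None of the three arguments depends on the value of $\Delta$, and that independence is exactly what the "therefore" clause asserts.

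For item 1, I will invoke Theorem~\ref{thm:nbe} directly. Its game places no restriction on how the adversary chose $\rho$, so a runtime that has read a fully distinguishing summary is still just a PPT adversary against $\mathcal{O}_{\mathsf{pv}},\mathcal{O}_{\mathsf{cm}}$. Summary content can therefore be folded into the adversary's auxiliary input. By complete mediation (premise (viii)), every side effect goes through an accepting commit query. Any unauthorized effect would then be a win in that game, which happens with probability at most $\mathrm{Adv}^{\mathrm{euf\text{-}cma}}_{\mathsf{MAC}}+\varepsilon_{\mathsf{bind}}(q_{\mathsf{bind}})$.

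For item 2, I will reuse the simulator of Theorem~\ref{thm:sm}, augmented as in Theorem~\ref{thm:bounded-summary}. The simulator is given $L_{\mathsf{handles}}$ together with the actual summary samples $\{D(v_{b,i})\}_i$, and from these it produces the runtime's entire view. Messages on C2 and shared-memory contents on C5 are computed by the runtime, so they are functions of this view and the runtime's own coins. Outside the bad event that a live handle is guessed (probability at most $\min(1,q_hN_{\mathsf{live}}/2^{\kappa})+\varepsilon_{\mathsf{hid}}$), trusted storage is never dereferenced into the runtime. Dereference happens only at the executor, and only on the write path toward the sink. Consequently, C2/C5 contents are post-processings of released summaries. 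A data-processing step, as in the post-processing paragraph, then gives the claim that any content-dependent signal on those channels originated in $D$.

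Item 3 is a counting statement. By complete mediation of sensitive reads, each authorized read emits exactly one sample of $D$, and there are at most $q_r$ reads. The hybrid in Theorem~\ref{thm:bounded-summary} bounds the read-path disclosure by $\sum_i\Delta_i$, which is at most $m\le q_r$ releases of bounded, capped content. Inserting $\Delta=1$ for category $\mathcal{C}$ makes that per-read term trivial. The earlier bounds for items 1 and 2 do not change, which yields the concluding sentence.

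The main obstacle is item 2. "Does not newly expose" needs a precise meaning, and there is a second channel to rule out: the executor's denial and recovery messages could carry dereferenced content back to the runtime. I would address this first by noting that recovery templates belong to $L_{\mathsf{handles}}$ by definition, since they are fixed templates with only a reason code. I would then state item 2 formally: conditioned on no handle-guess hit, the runtime's view is a function of $(L_{\mathsf{handles}},\{D(v_{b,i})\})$ and the runtime's own randomness.
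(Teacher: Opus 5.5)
Your proposal is correct and follows the paper's proof. It uses the same three-way split: item~1 comes directly from Theorem~\ref{thm:nbe}, whose premises do not involve the read operator. Item~2 comes from the trust split under which C2 and C5 never trigger a fresh dereference of trusted storage, so any content-dependent signal there must come from released summaries. Item~3 comes from the read API emitting one bounded summary per authorized read, and $\Delta=1$ is used only in the concluding remark. Your version is more explicit than the paper's: it casts item~2 as simulatability of the runtime view from $(L_{\mathsf{handles}},\{D(v_{b,i})\})$ outside the handle-guessing bad event, adds a data-processing step, and rules out recovery messages via $L_{\mathsf{handles}}$. The underlying argument is the same.
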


\begin{proof}
Item 1 follows directly from Theorem~\ref{thm:nbe}, whose premises are independent of the read operator.
Item 2 follows from the interface contract and trust split: C2 and C5 never trigger a fresh dereference of trusted storage; any content-dependent signal observed on those channels must already have been emitted by the authorized summary interface.
Item 3 follows from the read API itself, which emits one bounded summary per authorized read.
These constraints arise from different components and therefore coexist.
\end{proof}

\subsection{Distributed policy evaluation}
\label{app:proofs:policy}

Distributed policy evaluation is optional. Its purpose is to reduce plaintext concentration during policy evaluation. We write $L_{\mathsf{policy}}=(L_{\mathsf{PIR}},L_{\mathsf{MPC}},L_{\mathsf{confirm}},L_{\mathsf{time}})$ for the per-evaluator leakage summarized in Table~\ref{tab:policy-leakage}.
 
\paragraph{Security game.}
We now state the scoped transcript-privacy result for the distributed policy-engine instantiation. The adversary outputs two single actions $x_0,x_1$ such that $L_{\mathsf{policy}}(x_0)=L_{\mathsf{policy}}(x_1)$.
The challenger samples $b \leftarrow \{0,1\}$, executes $x_b$ through the gateway exactly once, and reveals the transcript visible to one policy evaluator $\mathcal{P}_\sigma$.
The adversary wins if it distinguishes which action was executed.
The game is intentionally single-query and non-adaptive because that is the claim this distributed instantiation is designed to support.

\begin{table}[h]
\centering
\small
\begin{tabular}{p{0.18\textwidth}p{0.70\textwidth}}
\toprule
Component & Contents \\
\midrule
$L_{\mathsf{PIR}}$ & PIR endpoint class, bundle or database identifier, padded batch geometry, and public domain size \\
$L_{\mathsf{MPC}}$ & program identifier, public circuit-shape metadata, and batch geometry \\
$L_{\mathsf{confirm}}$ & whether confirmation is required and the coarse confirmation class \\
$L_{\mathsf{time}}$ & coarse timing bucket and fixed-shape schedule metadata \\
\bottomrule
\end{tabular}
\caption{Per-evaluator leakage contract for the distributed policy engine.}
\label{tab:policy-leakage}
\end{table}

\begin{theorem}[Single-query per-evaluator transcript privacy for distributed policy evaluation]
\label{thm:policy-dist}
Assume
(i) the two-server PIR scheme is secure against one server,
(ii) the MPC protocol is secure in the honest-but-curious model,
(iii) the two policy evaluators do not collude, and
(iv) routing is fixed-shape and consistent with the declared leakage function $L_{\mathsf{policy}}$.
Then for each evaluator $\sigma$ there exists a PPT simulator $\mathsf{Sim}_\sigma$ such that the single-query transcript visible to $\mathcal{P}_\sigma$ is computationally indistinguishable from $\mathsf{Sim}_\sigma(L_{\mathsf{policy}})$.
\end{theorem}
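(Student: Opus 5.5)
The plan is a standard simulation proof with a short hybrid argument, carried out separately for each evaluator $\sigma\in\{0,1\}$. First I will fix what $\mathcal{P}_\sigma$ sees on a single execution of $x_b$. Its view is the PIR queries it receives, its MPC view (its input shares, its randomness, and the messages from $\mathcal{P}_{1-\sigma}$), the confirmation-class flag, and the timing and schedule metadata fixed by the fixed-shape routing of assumption (iv). Non-collusion (iii) lets us treat $\mathcal{P}_{1-\sigma}$ as honest. The view under consideration is therefore one evaluator's view of a two-party execution, which is exactly the setting covered by the single-server PIR guarantee (i) and the semi-honest MPC guarantee (ii).

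The simulator $\mathsf{Sim}_\sigma(L_{\mathsf{policy}})$ builds the transcript component by component:
\begin{itemize}
\item From $L_{\mathsf{PIR}}$ (endpoint class, database identifier, padded batch geometry, domain size), it produces PIR queries for a fixed dummy index in the public domain, padded to the declared batch size.
\item From $L_{\mathsf{MPC}}$ (program identifier and circuit shape), it runs the semi-honest MPC simulator for party $\sigma$. Its input share is replaced by a uniformly random share, which is how additive or secret sharing makes a single share independent of the input.
\item For the output it needs, it uses only what is already in the leakage contract or what $\mathcal{P}_\sigma$ itself sees, such as an authenticated share of the decision that is masked or MACed under the other party's share.
\item It copies $L_{\mathsf{confirm}}$ and $L_{\mathsf{time}}$ directly into the transcript.
\end{itemize}

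Indistinguishability then follows from a hybrid chain:
\begin{itemize}
\item $H_0$ is the real transcript for $x_b$.
\item $H_1$ replaces the PIR queries with dummy-index queries. The gap is bounded by single-server PIR security, through a reduction that embeds the challenge queries in the transcript and generates everything else itself.
\item $H_2$ replaces the MPC view with the output of the semi-honest simulator. The gap is bounded by MPC security.
\item $H_3$ replaces the remaining metadata with the corresponding components of $L_{\mathsf{policy}}$. This step is an identity by fixed-shape routing.
\end{itemize}
$H_3$ equals $\mathsf{Sim}_\sigma(L_{\mathsf{policy}}(x_b))$. Since the game requires $L_{\mathsf{policy}}(x_0)=L_{\mathsf{policy}}(x_1)$, the final hybrid is identical for $b=0$ and $b=1$. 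The adversary's advantage is therefore at most twice the sum of the PIR and MPC advantages, which is negligible.

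I expect the main obstacle to be the MPC hybrid, specifically the output the evaluator learns. Semi-honest simulation is given the party's own output, so I must argue that $\mathcal{P}_\sigma$'s output is either a random share of the decision or already contained in $L_{\mathsf{policy}}$. Otherwise the policy bit would leak beyond the contract. My first attempt is to use the protocol's design: each evaluator outputs only its own authenticated share, and the executor reconstructs $y$. A lone share is then uniform and can be simulated without knowing $y$.

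A second subtle point is the coupling between PIR and MPC. The PIR answers may feed into the MPC inputs, so the PIR reduction has to be able to produce the MPC part of the transcript without knowing the retrieved value. I plan to handle this by ordering the hybrids so that the MPC view is simulated first, which removes its dependence on the real inputs, and only then swapping the PIR queries. Because the game is single-query and non-adaptive, no further argument about composition across multiple queries is required.
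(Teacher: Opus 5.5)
Your proposal is correct and follows the paper's route. The paper's proof is a two-step hybrid: it replaces the PIR-visible part of $\mathcal{P}_\sigma$'s transcript with the one-server PIR simulator, then the MPC-visible part with the honest-but-curious MPC simulator, and notes that all remaining fields are explicit functions of $L_{\mathsf{policy}}$. Your additional care fills in points the paper's sketch leaves implicit: each evaluator outputs only its own authorization share so the decision $y$ stays outside $L_{\mathsf{policy}}$, and the MPC view is simulated before the PIR queries are swapped so the PIR reduction never has to produce MPC messages that depend on the retrieved record (the paper orders PIR first). When you write it up, reorder your listed chain $H_1,H_2$ to match that MPC-first ordering.
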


\begin{proof}
Use a two-step hybrid argument.
First replace the PIR-visible portion of the transcript with the simulator guaranteed by one-server PIR security.
Then replace the MPC-visible portion with the simulator guaranteed by honest-but-curious MPC privacy.
All remaining transcript fields are explicit functions of $L_{\mathsf{policy}}$ by construction.
Therefore the complete single-evaluator transcript is simulatable from $L_{\mathsf{policy}}$.
\end{proof}

Theorem~\ref{thm:policy-dist} explains why distributing the policy engine can reduce policy-side plaintext concentration without changing the main effect/plaintext split enforced elsewhere in SecureClaw.

\subsection{Trusted-component compromise and guarantee degradation}
\label{app:proofs:compromise}

Theorems~\ref{thm:nbe}, \ref{thm:sm}, and \ref{thm:bounded-summary} are component-scoped: they assume the gateway, handle store, policy engine, and executor are trusted. This section makes explicit what fails if that assumption is violated.

\begin{table}[h]
\centering
\small
\begin{tabular}{p{0.28\textwidth}p{0.60\textwidth}}
\toprule
Component compromised & Guarantee impact \\
\midrule
Runtime $\mathcal{R}$ & Still in scope. Request-bound authorization and handle confinement are designed to hold even when $\mathcal{R}$ is adversarial, except for information intentionally released through authorized summaries or policy-allowed actions. \\
Gateway $\mathcal{G}$ & Read confidentiality fails because the gateway can disclose plaintext directly. Canonicalization and preview formation may also become malicious, undermining the semantic meaning of request binding. \\
Handle store $\mathcal{S}$ & Confidentiality of protected plaintext fails. If handle metadata, object bindings, or version tags can be tampered with, dereference integrity may also fail unless separately protected. \\
Policy engine $\mathcal{P}$ & Effect authorization fails if policy incorrectly authorizes forbidden requests. Policy-input confidentiality to that service also fails. \\
One evaluator in the optional distributed variant & Per-evaluator transcript privacy fails for that evaluator's own view. Request-bound authorization can still fail closed provided the executor requires both valid shares and the second evaluator remains honest and non-colluding. \\
Both evaluators in the optional distributed variant & Effect authorization fails. The transcript-privacy claim also becomes void. \\
Executor $\mathcal{X}$ & Effect authorization fails completely because the executor is the sole commit point. Since the executor dereferences handles at commit time, compromise of $\mathcal{X}$ may also expose protected plaintext needed for sink execution. \\
\bottomrule
\end{tabular}
\caption{Guarantee degradation under trusted-component compromise.}
\label{tab:compromise-blast-radius}
\end{table}

\paragraph{Operational implication.}
SecureClaw materially reduces trust in the runtime, but it does not eliminate trust altogether. High-assurance deployments should therefore pair the architecture with hardening, audit logging, key isolation, and independent monitoring of the gateway, handle store, policy service, and executor.

\end{document}